\documentclass[acmsmall,screen]{acmart}
\usepackage{graphicx} 

\usepackage{gdefs}
\usepackage{refs}
\usepackage{mathtools}
\usepackage{amsmath}

\usepackage{kbordermatrix}
\usepackage{cancel}
\DeclareMathOperator{\tensor}{\otimes}     
\DeclareMathOperator{\bigO}{\mathcal{O}}   
\DeclareMathOperator{\bigOmega}{\Omega}    
\usepackage[export]{adjustbox}
\usepackage{multirow}
\usepackage{float}
\usepackage[justification=centering]{caption}
\usepackage[justification=centering]{subcaption}
\usepackage{amsfonts}
\usepackage{amsthm}
\usepackage{stmaryrd}
\usepackage{autobreak}
\usepackage{enumitem}
\setlist{noitemsep,topsep=0.7mm,leftmargin=*}
\usepackage{setspace}
\usepackage{hyperref}
\usepackage{graphbox}
\usepackage[framemethod=default]{mdframed}
\usepackage[ruled,linesnumbered]{algorithm2e}
\usepackage{nicematrix}
\SetKwInput{Input}{Input}
\SetKwInput{Output}{Output}
\SetKwBlock{Begin}{begin}{end}

\newtheorem{Construction}{\indent\,Construction}
\usepackage{xspace}
\usepackage{csquotes}
\usepackage{wrapfig}
\usepackage{rotating}
\usepackage{tablefootnote}
\usepackage{nccmath}

\makeatletter
\def\thmhead@plain#1#2#3{%
  \thmname{#1}\thmnumber{\@ifnotempty{#1}{ }\@upn{#2}}%
  \thmnote{ {\the\thm@notefont#3}}}
\let\thmhead\thmhead@plain
\makeatother
\theoremstyle{definition}
\newtheorem*{theorem*}{Theorem}

\newcounter{LineNumber}

\usepackage[color]{changebar}
\cbcolor{red}

\newcommand{\EQ}{{\textit{EQ}}}

\newcommand{\GHZ}{{\textit{GHZ}}\xspace}
\newcommand{\BV}{{\textit{BV}}\xspace}

\newcommand{\HadamardFamily}{{\mathcal{H}}}

\newcommand{\Omit}[1]{}

\definecolor{lightgray}{rgb}{0.55,0.52,0.54}

\newcommand{\OnlySupplemental}[1]{}    

\renewcommand{\subsubsection}[1]{\medskip\noindent{\textbf{#1}}}
\newcommand{\TIDD}{\textrm{TIDD}\xspace}
\newcommand{\TIDDs}{\textrm{TIDDs}\xspace}
\date{}

\keywords{Decision diagram, nested-word automata, quantum simulation}

\begin{document}

\title{Do CFLOBDDs Actually Make Use of Linear Structure?}

\author{Meghana Sistla}
\orcid{0000-0002-4215-0651}
\affiliation{%
  \institution{University of Texas at Austin}
  \city{Austin}
  \country{USA}
}
\email{mesistla@utexas.edu}

\author{Swarat Chaudhuri}
\orcid{0000-0002-6859-1391}
\affiliation{%
  \institution{University of Texas at Austin}
  \city{Austin}
  \country{USA}
}
\email{swarat@cs.utexas.edu}

\author{Thomas Reps}
\orcid{0000-0002-5676-9949}
\affiliation{%
  \institution{University of Wisconsin}
  \city{Madison}
  \country{USA}
}
\email{reps@cs.wisc.edu}

\begin{abstract}
Binary Decision Diagrams (BDDs) are a widely used data structure for efficient Boolean function representation. Context-Free-Language Ordered Binary Decision Diagrams (CFLOBDDs) are a recently introduced hierarchical data structure that
can, in the best case, exhibit
exponential compression over BDDs and double-exponential compression over decision trees.
Roughly speaking, a CFLOBDD is a finite, acyclic, non-recursive hierarchical finite-state machine (HFSM) (with some additional restrictions).
In this paper, we investigate the role of \emph{linear structure} in CFLOBDDs---a property that connects them to Nested-Word Automata (NWAs) and Visibly Pushdown Automata (VPAs)---and examine whether CFLOBDDs actively exploit this structure beyond their well-studied hierarchical properties. We demonstrate that linear structure, in conjunction with hierarchical structure, plays a crucial role in enabling CFLOBDDs to achieve efficient function compression. Furthermore, we show that removing linearity from CFLOBDDs leads to a significant blowup in representation size, resulting in degraded performance in the domain of quantum-circuit simulation.
\end{abstract}

\maketitle

\section{Introduction}
\label{Se:ch6_intro}


Efficient representation of data encoded as Boolean functions---used to model matrices, relations, etc.---is fundamental to several areas of computer science. Binary Decision Diagrams \cite{toc:Bryant86,brace1991efficient} (BDDs) provide a compact representation of such functions and have become a cornerstone of symbolic methods in verification, program analysis, and, more recently,
simulation of quantum circuits on classical hardware.
More recently, a new data structure, called \emph{Context-Free-Language Ordered Binary Decision Diagrams} (CFLOBDDs)~\cite{TOPLAS:SCR24}, was introduced,
which introduces hierarchy into the decision-diagram representation, and can, in the best case, exhibit exponential compression over the BDD representation of a function, and double-exponential compression over the decision-tree representation of a function.

CFLOBDDs are a kind of 
finite, acyclic, non-recursive
hierarchical finite-state machine (HFSM) \cite{DBLP:journals/toplas/AlurBEGRY05}, of a special form:
every grouping at level $l$ in a CFLOBDD has a ``call'' to another grouping at level $l-1$ through an AConnection, \textit{followed} by calls to groupings at level $l-1$ through BConnections.
The number of BConnections is determined by the number of exits of the AConnection.
A CFLOBDD has a hierarchical structure through these ``calls'' to lower-level groupings;
in contrast, BDDs have no analogue of such ``calls,'' and thus have nothing that corresponds to a call hierarchy.

Appendix K of~\cite{2211.06818} discusses how CFLOBDDs can be viewed as going somewhat beyond HFSMs, in that they have some of the features of Nested-Word Automata (NWAs) (or Visibly Pushdown Automata (VPAs)).
In particular, like NWAs, CFLOBDDs combine a hierarchical structure with a \emph{linear structure}.
\sectref{ch6_NWADefinition} formally defines the linear structure of NWAs (similar to Appendix K of~\cite{2211.06818}) and shows how CFLOBDDs are a special case of NWAs.

\cite{TOPLAS:SCR24} shows how the
call hierarchy in CFLOBDDs can provide an exponential reduction in size, compared with BDDs, both 
theoretically and in practice.
In this paper, we explore the question of whether
some of the benefits enjoyed by CFLOBDDs are due to
the linear structure.
To study this question, we introduce a new kind decision diagram, called Tree-automata Inspired Decision Diagrams (TIDDs), that is similar to CFLOBDDs, but without the linear structure: TIDDs possess only the hierarchical property of CFLOBDDs (and can be viewed as a restricted form of deterministic tree automata). 

\sectrefs{ch6_separation_intuition}{ch6_separation_example} discuss how 
CFLOBDDs can represent functions better than \TIDDs because of the lack of linear structure in \TIDDs, and present an example of a function $f$ for which there is an exponential gap between CFLOBDDs and \TIDDs in terms of representation size---i.e., a CFLOBDD for $f$ can be exponentially smaller than a \TIDD for $f$ with the same variable ordering. 

\paragraph{Organization.}
The remainder of the paper covers the following material:
    \sectref{BackgroundAndOverview} gives background about how CFLOBDDs are a restricted form of NWAs (and how CFLOBDDs incorporate both hierarchical and linear structure), and gives an overview of how we can create a hierarchical decision diagram that lacks linear structure.
    \sectref{ch6_tidds} defines
    \TIDDs, a new data structure to represent Boolean functions, relations, etc.
    \sectref{ch6_ops} gives an overview of the operations that \TIDDs support.
    To explore the relative expressivity of CFLOBDDs and TIDDs, \sectref{ch6_examples} discusses examples of three
    kinds: (i) ones for which \TIDDs are as good as CFLOBDDs and exponentially better than BDDs (\sectref{ch6_efficient_relations}), 
    (ii) ones that provide intuition about how CFLOBDDs use the linear structure and can represent functions better than \TIDDs (\sectref{ch6_separation_intuition}),
    and
    (iii) an example that illustrates
    an exponential gap between CFLOBDDs and \TIDDs (with same variable ordering) (\sectref{ch6_separation_example}).
    \sectref{ch6_eval} presents
    some sample evaluations showing CFLOBDDs performing better than \TIDDs in practice in the quantum-simulation domain.
    \sectref{related} discusses related work.
    \sectref{conclusion}
    presents some concluding remarks.


\section{Background and Overview}
\label{Se:BackgroundAndOverview}

\begin{figure}[tb!]
  \centering
  \begin{tabular}{c@{\hspace{8.0ex}}c}
    \includegraphics[align=c,scale=0.45]{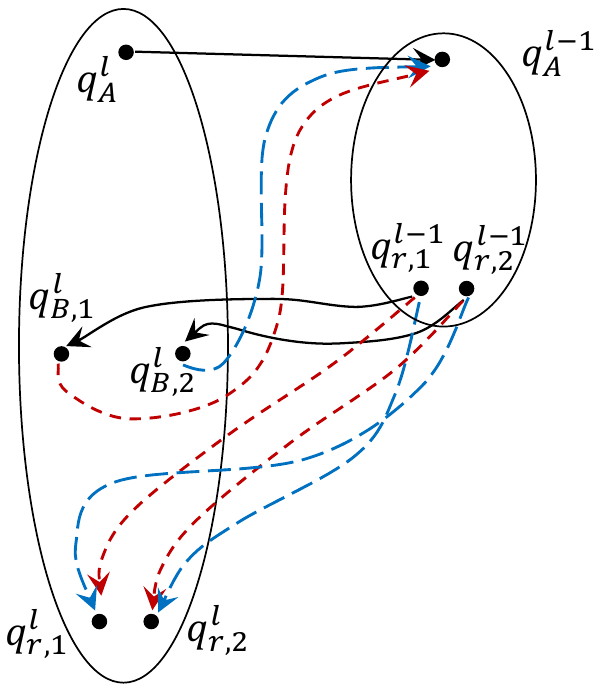}
    &
    \begin{tabular}{c}
      \includegraphics[scale=0.45]{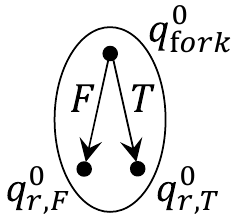}
      \\
      (b)
      \\
      \\
      \\
      \includegraphics[scale=0.45]{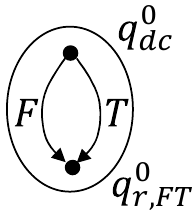}
      \\
      (c)
    \end{tabular}
    \\
    (a) &
    \\
    \\
    \multicolumn{2}{c}{
    \begin{tabular}{l|ll}
      \hline\hline
      \multirow{9}{*}{(a)} & \multirow{3}{*}{call transitions}     & $(q_A^l,\epsilon,q_A^{l-1}) \in \delta_c$ \\
                           &                                       & $(q_{B,1}^l,\epsilon,q_A^{l-1}) \in \delta_c$ \\
                           &                                       & $(q_{B,2}^l,\epsilon,q_A^{l-1}) \in \delta_c$ \\
                           \cline{2-3}
                           & \multirow{6}{*}{return transitions}   & $(q_{r,1}^{l-1},q_A^l,\epsilon,q_{B,1}^l) \in \delta_r$ \\
                           &                                       & $(q_{r,2}^{l-1},q_A^l,\epsilon,q_{B,2}^l) \in \delta_r$ \\
                           &                                       & $(q_{r,1}^{l-1},q_{B,1}^l,\epsilon,q_{r,1}^l) \in \delta_r$ \\
                           &                                       & $(q_{r,2}^{l-1},q_{B,1}^l,\epsilon,q_{r,2}^l) \in \delta_r$ \\
                           &                                       & $(q_{r,1}^{l-1},q_{B,2}^l,\epsilon,q_{r,2}^l) \in \delta_r$ \\
                           &                                       & $(q_{r,2}^{l-1},q_{B,2}^l,\epsilon,q_{r,1}^l) \in \delta_r$ \\
      \hline
      \multirow{2}{*}{(b)} & \multirow{2}{*}{internal transitions} & $(q^0_{\textrm{fork}}, F, q^0_{r,F}) \in \delta_i$ \\
                           &                      & $(q^0_{\textrm{fork}}, T, q^0_{r,T}) \in \delta_i$ \\
      \hline
      \multirow{2}{*}{(c)} & \multirow{2}{*}{internal transitions} & $(q^0_{\textrm{dc}}, F, q^0_{r,FT}) \in \delta_i$  \\
                           &                      & $(q^0_{\textrm{dc}}, T, q^0_{r,FT}) \in \delta_i$  \\
      \hline\hline
    \end{tabular}
    }
  \end{tabular}
  \caption{\protect \raggedright 
  (a) Encoding of a grouping's A-connection and B-connections as call transitions, and its return edges as return transitions of an NWA.
  The grouping is the one used to encode the family of Hadamard matrices $\HadamardFamily$.
  (b) and (c) Encoding of the two kinds of level-$0$ groupings as internal transitions of an NWA.}
  \label{Fi:ch6_CFLOBDDAsNWA}
\end{figure}

\subsection{Nested Words and Nested-Word Automata}
\label{Se:ch6_NWADefinition}

\begin{defn}[\cite{alur2009adding}]
\label{De:ch6_NestedWord}
A \textbf{nested word} $(w,\rightsquigarrow)$ over alphabet $\Sigma$
is an ordinary word $w \in \Sigma^*$, together with a \textbf{nesting relation}
$\rightsquigarrow$ of length $|w|$. $\rightsquigarrow$ is a collection of edges
(over the positions in $w$) that do not cross.
A nesting relation of length $l \geq 0$ is a subset of
$\{-\infty, 1, 2, \ldots, l \} \times \{1, 2, \ldots, l, +\infty\}$ such that
\begin{itemize}
  \item
    Nesting edges only go forwards: if $i \rightsquigarrow j$ then $i < j$.
  \item
    No two edges share a position: for $1 \leq i \leq l$,
    $|\{ j \mid i \rightsquigarrow j \}| \leq 1$
    and $|\{ j \mid j \rightsquigarrow i \}| \leq 1$.
  \item
    Edges do not cross: if $i \rightsquigarrow j$ and $i' \rightsquigarrow j'$,
    then one cannot have $i < i' \leq j < j'$.
\end{itemize}
When $i \rightsquigarrow j$ holds, for $1 \leq i \leq l$,
$i$ is called a \textbf{call} position; if $i \rightsquigarrow +\infty$,
then $i$ is a \textbf{pending call}; otherwise $i$ is a \textbf{matched call},
and the unique position $j$ such that $i \rightsquigarrow j$ is called
its \textbf{return successor}.
Similarly, when $i \rightsquigarrow j$ holds, for $1 \leq j \leq l$,
$j$ is a \textbf{return} position; if $-\infty \rightsquigarrow j$,
then $j$ is a \textbf{pending return}, otherwise $j$ is a \textbf{matched return},
and the unique position $i$ such that $i \rightsquigarrow j$ is called
its \textbf{call predecessor}.
A position $1 \leq i \leq l$ that is neither a call nor a return
is an \textbf{internal} position.

$\textbf{MatchedNW}$ denotes the set of nested words that have no
pending calls or returns.
$\textbf{NWPrefix}$ denotes the set of nested words that have no
pending returns.

A \textbf{nested word automaton} (NWA) $A$ is a 5-tuple $(Q, \Sigma,
q_0, \delta, F)$,
where $Q$ is a finite set of states, $\Sigma$ is a
finite alphabet, $q_0 \in Q$ is the initial state, $F \subseteq Q$ is a set
of final states, and $\delta$ is a transition relation. The transition
relation $\delta$ consists of three components, $(\delta_c, \delta_i,
\delta_r)$, where
\begin{itemize}
  \item
    $\delta_i \subseteq Q \times \Sigma \times Q$ is the transition
    relation for internal positions.
  \item
    $\delta_c \subseteq Q \times \Sigma \times Q$ is the transition
    relation for call positions.
  \item
    $\delta_r \subseteq Q \times Q \times \Sigma \times Q$ is the
    transition relation for return positions.
\end{itemize}

Starting from $q_0$, an NWA $A$ reads a nested word
$\textit{nw} = (w,\rightsquigarrow)$ from left to
right, and performs transitions (possibly non-deterministically)
according to the input symbol and $\rightsquigarrow$.
If $A$ is in state $q$ when reading input symbol $\sigma$ at position $i$ in $w$,
and $i$ is an internal or call position, $A$ makes a transition to $q'$
using $(q,\sigma,q') \in \delta_i$ or $(q,\sigma,q') \in \delta_c$, respectively.
If $i$ is a return position, let $k$ be the call predecessor of $i$,
and $q_c$ be the state $A$ was in just before the transition it made on the
$k^{\text{th}}$ symbol;
$A$ uses $(q,q_c,\sigma,q') \in \delta_r$ to make a transition to $q'$.
If, after reading $\textit{nw}$, $A$ is in a state $q \in F$,
then $A$ \textbf{accepts} $\textit{nw}$.
\end{defn}

The automaton $A$ processes the nested word $nw$ according to a linear order of the characters of $nw$, along with obeying the hierarchy in $nw$.
As discussed in~\cite{alur2009adding}, the state propagated along the linear edges (internal transitions) is the same as in the case
of a standard word automaton. At a call position, the nesting edge is considered to determine the hierarchical state,
and at a return position, the automaton determines the new state based on the linear (internal) and return edges.

\figref{ch6_CFLOBDDAsNWA} illustrates a schema by which a CFLOBDD can be translated to an NWA $M$.
Each matched path through the CFLOBDD corresponds to a nested word in MatchedNW for $M$. 
The matched-path principle is obeyed because of the ability of an NWA to ``peek'' at the state of the most-recent ``call'' to match a return edge with the appropriate preceding A-connection or B-connection.
All transitions taken at a level $\geq 1$ are $\epsilon$-transitions (\figref{ch6_CFLOBDDAsNWA}a).
The only transitions that consume an alphabet symbol are the $F$ and $T$ transitions of the level-$0$ fork grouping (\figref{ch6_CFLOBDDAsNWA}b) and the $F$ and $T$ transitions of the level-$0$ don't-care grouping (\figref{ch6_CFLOBDDAsNWA}c).

\begin{figure}[t!]
    \centering
    \begin{tabular}{c}
     \includegraphics[scale=0.4]{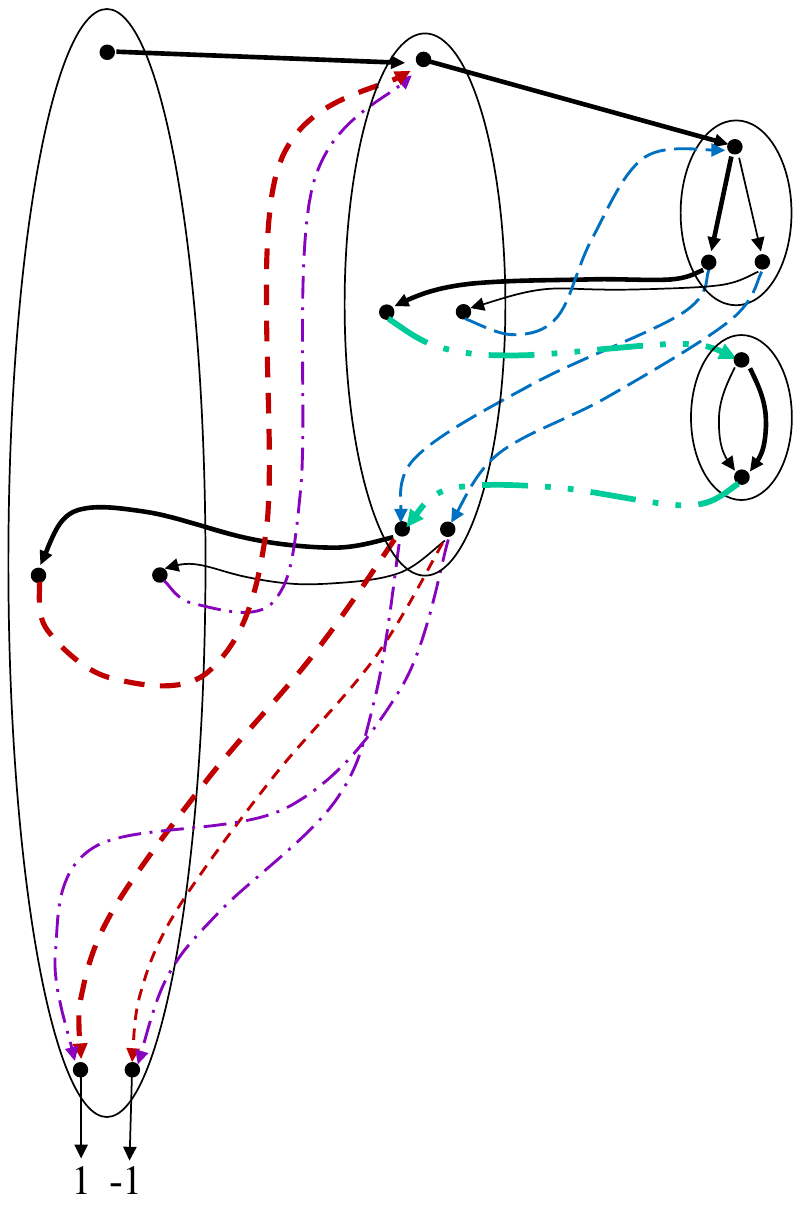} \\
    (a) \\
    \\
    \includegraphics[scale=0.46]{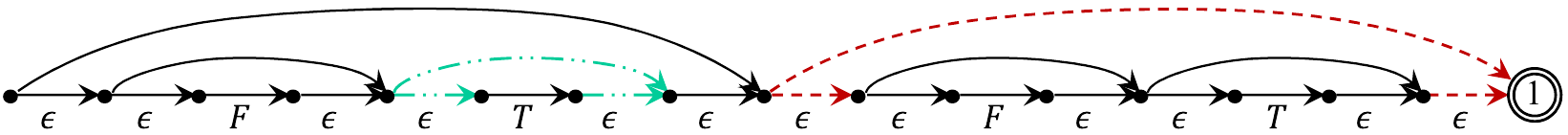} \\
    (b)
    \end{tabular}
    \caption{\protect \raggedright 
    (a) The CFLOBDD for the Hadamard matrix $H_4$, with the variable ordering $\langle x_0, y_0, x_1, y_1 \rangle$.
    The matched path for $[x_0 \mapsto F, y_0 \mapsto T, x_1 \mapsto F, y_1 \mapsto T]$, which corresponds to $H_4[0,3]$ (with value $1$), is shown in bold.
    (b) The nested word for the path for  $[x_0 \mapsto F, y_0 \mapsto T, x_1 \mapsto F, y_1 \mapsto T]$.
    }
    \label{Fi:ch3_NestedWordExample}
\end{figure}

\figref{ch3_NestedWordExample} shows the nested word that corresponds to the path for the assignment  $[x_0 \mapsto F, y_0 \mapsto T, x_1 \mapsto F, y_1 \mapsto T]$ in the CFLOBDD for the Hadamard matrix $H_4 = \begin{bmatrix}
    1 & 1 & 1 & 1\\
    1 & -1 & 1 & -1\\
    1 & 1 & -1 & -1\\
    1 & -1 & -1 & 1\\
\end{bmatrix}$, with the variable ordering $\langle x_0, y_0, x_1, y_1 \rangle$, where $x_0, x_1$ correspond to the row variables and $y_0, y_1$ correspond to the column variables.

For each assignment, the state after a given prefix $p$ is a function of the bindings in $p$ seen so far,
making CFLOBDDs depend on 
the linear order of the assignment.
But it is worth pointing out that the hierarchical structure interacts with the linear order to allow proto-CFLOBDDs to be shared.
For instance, consider the three level-2-to-level-1 calls in \figref{ch3_NestedWordExample} (where the three calling contexts are the level-2 grouping's entry vertex, its first middle vertex, and its second middle vertex).
It allows the level-1 proto-CFLOBDD to be shared at the three calls---with all three transitioning to the state represented by the entry vertex of the level-1 grouping.
At the exit vertices of the level-1 grouping, the various return transitions transfer control to the appropriate states in the level-2 grouping according to the calling context.

\subsection{A Hierarchical Decision Diagram that Lacks Linear Structure}
\label{Se:AHierarchicalDecisionDiagramThatLacksLinearStructure}

To understand whether the functions represented by CFLOBDDs exploit the 
linear structure of words (i.e., assignments),
we introduce a new kind of decision diagram that is similar to CFLOBDDs, but without their NWA-like
linear structure.
This approach results in the new decision diagrams possessing only the hierarchical properties, and hence
they
can be viewed as a restricted form of deterministic tree automata.
More formally, we introduce Tree-automata Inspired Decision Diagrams
(\TIDDs)---acyclic deterministic tree-automata to represent Boolean functions---which
run on trees whose leaves are values of the different variables of the given function.
That is, a tree represents an assignment, and for a pseudo-Boolean function $f$, the value that the \TIDD for $f$ gives to the tree for assignment $a$ is $f(a)$.
We want \TIDDs to be as close to CFLOBDDs as possible without the linearity in CFLOBDDs;
hence, \TIDDs also follow 
an analogue of
the ``Contextual-Interpretation'' principle obeyed by CFLOBDDs:
if two sub-functions defined over different groups of variables are equal, then they are represented by the same substructure in 
the data structure that represents
a \TIDD.
(This point should become clearer in \sectref{ch6_tidds}.)

In particular, we want to understand how deterministic tree automata can represent Boolean functions, and how they compare with CFLOBDDs in terms of the representation size and
cost of performing
operations.

For the representation of an assignment $a$, we use a perfect binary tree $t_a$ such that the linear sequence (left-to-right) formed by the leaves of $t_a$ is equal to the assignment $a$.
The tree structure is equivalent to the grammar followed by CFLOBDDs, so
the ideas behind \TIDDs could be generalized to allow the user to define the tree structure to use (similar to the generalization of CFLOBDDs discussed in the Future Work of~\cite{2211.06818}).

\begin{figure}[h!]
    \centering
    \begin{subfigure}{0.45\linewidth}
    \includegraphics[width=\linewidth]{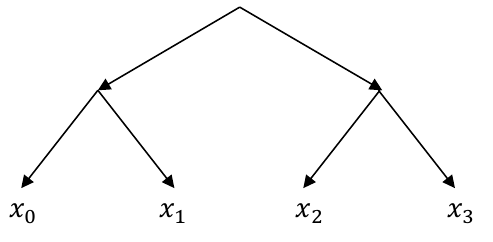}
    \caption{Tree with four variables}
    \end{subfigure}
    \begin{subfigure}{0.45\linewidth}
    \includegraphics[width=\linewidth]{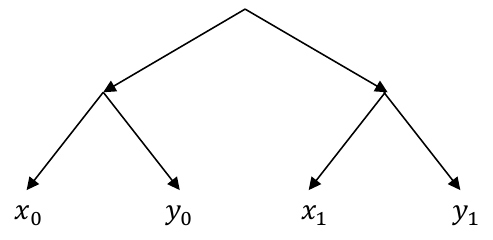}
    \caption{Interleaved tree with four variables}
    \end{subfigure}
    \caption{Tree representation of the variables of a function}
    \label{Fi:ch6_sample-tree}
\end{figure}

However, because CFLOBDDs follow a balanced grammar, \TIDDs accept only perfect binary trees, where the tree's leaves contain the function's variables and the internal nodes do not contain any symbols/variables.
\figref{ch6_sample-tree}(a) shows the tree for a function over 4 variables -- $\langle x_0, x_1, x_2, x_3 \rangle$, and \figref{ch6_sample-tree}(b) shows the tree for a function (e.g., a matrix) over two sets of interleaved variables -- $\langle x_0, y_0, x_1, y_1 \rangle$.

We show that CFLOBDDs indeed make use of the linear structure and can represent functions
more
efficiently than \TIDDs.

\section{Tree-Automata Inspired Decision Diagrams (\TIDDs)}
\label{Se:ch6_tidds}

\subsection{Basic Structure}
\label{Se:ch6_BasicStructure}

Because \TIDDs are a special version of deterministic finite tree automata that represent Boolean functions, we will define \TIDDs in the same way as one defines a tree automaton, with small differences.
Generally, a tree automaton is a tuple $M = (\mathcal{Q}, \mathcal{Q}_f, \mathcal{F}, \Delta)$.
$M$ accepts a language of trees $\mathcal{L}$, where $\mathcal{Q}$ is the set of states, $\mathcal{Q}_f$ is the set of final states ($\mathcal{Q}_f \subset \mathcal{Q}$), $\mathcal{F}$ is the set of symbols and $\Delta$ represents the transition relation. 
Because we want to represent functions, there is no notion of ``acceptance.'' All the states are accepted, but every final state has an associated ``value.'' Note that we are only dealing with acyclic deterministic tree automata.

We want to represent functions using \TIDDs. An assignment $a$ of the variables of the function is converted to a tree $t_a$ that the tree automaton runs on.
If a function $f$ is defined over $n$ Boolean variables, then the corresponding tree $t_a$ is a 
perfect binary tree of
height $\log n$. The leaves of $t_a$ are the Boolean variables, and the internal nodes have one symbol $\Omega$.

\begin{defn}\label{De:ch6_TIDDDefn}
    We define a \TIDD as a deterministic, acyclic, bottom-up, tree automaton, divided into ``levels'' with a tuple $M = (l, \mathcal{Q} = \mathcal{Q}_0 \cup \mathcal{Q}_1 \cup \ldots \cup \mathcal{Q}_l, \mathcal{Q}_f, \mathcal{F}, \Delta, \mathcal{V})$ and the following constraints:
    \begin{itemize}
        \item The number of levels of states in a \TIDD is $l+1$.
        \item $\mathcal{Q}$ is the set of states. Each $Q_i$, for $i \in \{ 0,\ldots, l\}$, is the set of states at level $i$, called a state-layer.
        \item $\mathcal{Q}_f$ is the set of final states with $\mathcal{Q}_f = \mathcal{Q}_l$.
        \item $\mathcal{F}$ is the alphabet, i.e., set of symbols. $\mathcal{F} = \{0,1, \Omega\}$. $\textbf{0}$, $\textbf{1}$ correspond to the symbols at the leaves of the tree, and an implicit symbol -- $\Omega$, with arity 2 -- corresponds to an internal symbol of the tree.
        \item $\Delta$ is the transition relation. Because the automaton is deterministic, the transition relation $\Delta$ becomes a transition function $\delta$ of the form: $\delta: {\mathcal{Q}_i} \times {\mathcal{Q}_i} \rightarrow {\mathcal{Q}_{i+1}}$, for $i \in \{0,\ldots, l-1\}$. 
        At level-$0$, $\delta: \mathcal{F} \rightarrow Q_0$. We assign a fixed ``rejected'' state for $\delta$ on symbol $\Omega$ at
        level-$0$---i.e., no leaf is labeled by $\Omega$,
        and do not explicitly discuss it when
        defining
        a \TIDD.
        We refer to 
        an input-output triple of $\delta$ (or input-output pair, in the case of level-0) as either a ``transition'' or a ``hyper-edge'' (usually shortened to ``edge'').
        \item $\mathcal{V}$ is the value function from every final state ($\mathcal{Q}_f$) to a value $v \in \mathcal{D}$ (some domain), i.e., $\mathcal{V}: \mathcal{Q}_f \rightarrow \mathcal{D}$. Every assignment $a$ of variables of a function $f$ has a corresponding tree $t_a$ with the variables as the leaves of the tree. Hence, the value of the final state reached by the running $M$ on $t_a$ is equal to the value of the function $f$ at $a$.
    \end{itemize}

\TIDDs must also satisfy the following properties.
\begin{enumerate}
  \item
    Each assignment has a value (existence of a run): for every assignment-tree $t$, $M$ does not get stuck: there is a run of $M$ that leads to some final state in $\mathcal{Q}_f$.
  \item
    Uniqueness of the representation of $\delta$:
    \begin{enumerate}[label=(\roman*)]
      \item
        At each level $i$, there is a total order on the states in $\mathcal{Q}_i$.
        Let $\textit{Seq}_{\mathcal{Q}_i}$ denote the vector that list the states in that order.
        The names of states at level $i$ are chosen according to the total order:
        $[q^i_1, \ldots, q^i_{|\textit{Seq}_{\mathcal{Q}_i}|}]$.
      \item
        \label{It:LevelZeroStates}
        At level 0, the state assigned when the symbol is a 0 is $q^0_0$; the state assigned when the symbol is a 1 is $q^0_1$.
        $\textit{Seq}_{\mathcal{Q}_0} =_\textit{df} [q^0_0, q^0_1]$.
      \item
      \label{It:LevelIPlus1States}
        The total order on states at level $i+1$ equals the order in which they are mentioned in the sequence
        \[
          \textit{Seq}_{\mathcal{Q}_{i+1}} = [ \delta(q_l, q_r) \mid (q_l, q_r) \in \textit{Seq}_{\mathcal{Q}_i} \otimes \textit{Seq}_{\mathcal{Q}_i}]
        \]
        (where $\textit{Seq}_{\mathcal{Q}_i} \otimes \textit{Seq}_{\mathcal{Q}_i}$ is a Kronecker-product operation that pairs sequence elements, and creates a total order on the pairs of states in $\mathcal{Q}_i$).
      \item
        $\mathcal{V}$ is one-to-one and onto $\mathcal{D}$.
      \item 
        \label{It:Congruence}
        At each level $i$, any two different states can be distinguished by how they interact with at least one other level-$i$ state via the transition function $\delta$:
        \[
          \forall q_j, q_k \in \mathcal{Q}_i . \left( q_j \neq q_k
          \implies
          \exists q \in \mathcal{Q}_i .
            \delta(q_j, q) \neq \delta(q_k, q)
            \lor
            \delta(q, q_j) \neq \delta(q, q_k) \right).
        \]
    \end{enumerate}
\end{enumerate}
\end{defn}

\begin{exmp}\label{Exa:ch6_hadamard}

\begin{figure}[tb!]
    \centering
    \begin{subfigure}[t]{0.2\linewidth}
        \includegraphics[width=.75\linewidth]{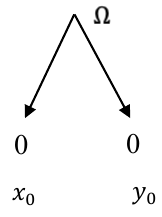}
    \end{subfigure}
    \hspace{2ex}
    \begin{subfigure}[t]{0.2\linewidth}
        \includegraphics[width=.75\linewidth]{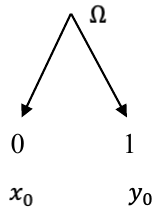}
    \end{subfigure}
    \hspace{2ex}
    \begin{subfigure}[t]{0.2\linewidth}
        \includegraphics[width=.75\linewidth]{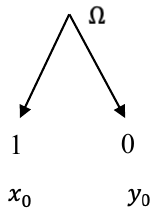}
    \end{subfigure}
    \hspace{2ex}
    \begin{subfigure}[t]{0.2\linewidth}
        \includegraphics[width=.75\linewidth]{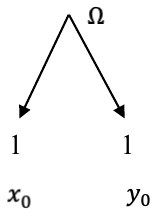}
    \end{subfigure}
    \caption{The set of trees $\mathcal{T_H}$ representing the assignments of $H_2$ with two variables $x_0, y_0$ as the leaves of the tree.}
    \label{Fi:ch6_walsh1_tidd_tree}
\end{figure}

\begin{figure}[tb!]
    \centering
    \begin{subfigure}[t]{0.43\linewidth}
        \includegraphics[width=0.65\linewidth]{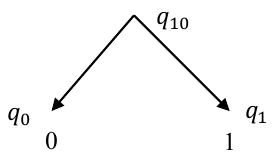}
        \caption{
        The diagram
        shows states $q_0$, $q_1$, and $q_{10}$, and transitions [$0 \rightarrow q_0$, $1 \rightarrow q_1$, $(q_0, q_1) \rightarrow q_{10}$] of $M_{H_2}$ for the tree representing the assignment $\langle x_0 \mapsto 0, y_0 \mapsto 1 \rangle$.}
        \label{Fi:ch6_walsh1_tidd}
    \end{subfigure}
    \hspace{2.0ex}
    \begin{subfigure}[t]{0.5\linewidth}
        \includegraphics[width=0.85\linewidth]{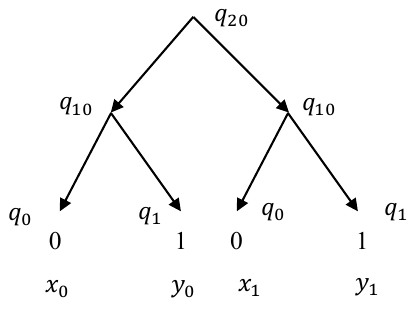}
        \caption{
        The diagram
        shows states $q_0$, $q_1$, $q_{10}$, and $q_{20}$, and transitions [$0 \rightarrow q_0$, $1 \rightarrow q_1$, $(q_0, q_1) \rightarrow q_{10}$, $(q_{10}, q_{10}) \rightarrow q_{20}$] of $M_{H_4}$ for the tree representing the assignment $\langle x_0 \mapsto 0, y_0 \mapsto 1, x_1 \mapsto 0, y_1 \mapsto 1 \rangle$.}
        \label{Fi:ch6_walsh2_tidd}
    \end{subfigure}
    \caption{
    The diagrams
    show the runs of tree automata (\TIDDs) $M_{H_2}$ and $M_{H_4}$ on two trees corresponding to two assignments 
    for the functions (matrices)
    $H_2$ and $H_4$.}
    \label{Fi:ch6_walsh_tidd}
\end{figure}

Let us consider a family of Hadamard matrices, $\mathcal{H} = \{H_{2^i} \mid i \geq 1 \}$,
and for $i \geq 1$, $H_{2^{i+1}} = H_{2^i} \tensor H_{2^i}$, where $H_2 = \begin{bNiceArray}{cc}[first-col,first-row]
                       & 0 & 1\\
                     0 & 1 & 1\\
                     1 & 1 & -1\\
                \end{bNiceArray}$.

$H_2$ is a function over two Boolean variables -- $x_0$ for 
row-index variables
and $y_0$ for 
column-index variables.
Hence, $H_2$ has four assignments corresponding to $\langle x_0, y_0 \rangle = \{00,01,10,11\}$.
The set of trees $\mathcal{T_H}$
that represent these assignments are the four binary trees with two leaves
shown in~\figref{ch6_walsh1_tidd_tree}.
The \TIDD representation of $H_2$ would be a tuple $M_{H_2} = (l = 1, \mathcal{Q} = {\mathcal{Q}_0} \cup {\mathcal{Q}_1}, \mathcal{Q}_f = {\mathcal{Q}_1}, \mathcal{F} = \{0,1,\Omega\}, \delta, \mathcal{V})$, with
\begin{itemize}
    \item $\mathcal{Q}_0 = \{q_0, q_1\}$, $\mathcal{Q}_1 = \{q_{10}, q_{11}\}$
    \item $\delta$ function:
    \[
    \begin{array}{cccc}
         \multicolumn{2}{c}{\delta(\mathbf{0}) \rightarrow q_0} & \multicolumn{2}{c}{\delta(\mathbf{1}) \rightarrow q_1}\\
         \delta(q_{0}, q_{0}) = q_{10} & \delta(q_{0}, q_{1}) = q_{10} & \delta(q_{1}, q_{0}) = q_{10} & \delta(q_{1}, q_{1}) = q_{11}\\
    \end{array}
    \]
    \item $\mathcal{V}(q_{10}) = 1$, $\mathcal{V}(q_{11}) = -1$
\end{itemize}

We can observe that $M_{H_2}$ on assignments $\{00,01,10\}$ leads to the value $1 = H_2[0,0] = H_2[0,1] = H_2[1,0]$, and assignment $\{11\}$ leads to the value $-1 = H_2[1,1]$.
The \TIDD representation of $H_2$ has 4 states and 6 edges.
\figref{ch6_walsh1_tidd} shows the run of $M_{H_2}$ on the tree with leaves $\langle 0, 1 \rangle$, which corresponds to the assignment $[x_0 \mapsto 0, y_0 \mapsto 1]$. We can observe that the states on reading the symbols at the leaves 
are
$q_0$ for $0$ and $q_1$ for $1$,
and similarly, the state transitioned to at the next level
is
$(q_0, q_1) \rightarrow q_{10}$.

Let us consider the next matrix in this series $H_4 = H_2 \tensor H_2 = \begin{bNiceArray}{cc}[first-col,first-row]
                       & 0 & 1\\
                     0 & H_2 & H_2\\
                     1 & H_2 & -H_2\\
                \end{bNiceArray} = \begin{bNiceArray}{cccc}[first-col,first-row]
                       & 00 & 01 & 10 & 11\\
                     00 & 1 & 1 & 1 & 1\\
                     01 & 1 & -1 & 1 & -1\\
                     10 & 1 & 1 & -1 & -1\\
                     11 & 1 & -1 & -1 & 1\\
                \end{bNiceArray}$.
The \TIDD representation of $H_4$ is a tuple $M_{H_4} = (l = 2, \mathcal{Q} = {\mathcal{Q}_0} \cup {\mathcal{Q}_1} \cup {\mathcal{Q}_2}, \mathcal{Q}_f = {\mathcal{Q}_2}, \mathcal{F} = \{0,1,\Omega\}, \delta, \mathcal{V})$, with
\begin{itemize}
    \item $\mathcal{Q}_0 = \{q_0, q_1\}$, $\mathcal{Q}_1 = \{q_{10}, q_{11}\}$, and $\mathcal{Q}_2 = \{q_{20}, q_{21}\}$
    \item $\delta$ function:
    \[
    \begin{array}{cccc}
         \multicolumn{2}{c}{\delta(\mathbf{0}) \rightarrow q_0} & \multicolumn{2}{c}{\delta(\mathbf{1}) \rightarrow q_1}\\
         \delta(q_{0}, q_{0}) = q_{10} & \delta(q_{0}, q_{1}) = q_{10} & \delta(q_{1}, q_{0}) = q_{10} & \delta(q_{1}, q_{1}) = q_{11}\\
         \delta(q_{10}, q_{10}) = q_{20} & \delta(q_{10}, q_{11}) = q_{21} & \delta(q_{11}, q_{10}) = q_{21} & \delta(q_{11}, q_{11}) = q_{20}\\
    \end{array}
    \]
    \item $\mathcal{V}(q_{20}) = 1$, $\mathcal{V}(q_{21}) = -1$
\end{itemize}

The \TIDD representation of $H_2$ has 6 states and 10 edges.
We can observe that ${\mathcal{Q}_0}, {\mathcal{Q}_1}$ of $M_{H_4}$ and $M_{H_2}$ are the same. 
Similarly, the transition function between states of $\mathcal{Q}_0$ and $\mathcal{Q}_1$ of $M_{H_4}$ and $M_{H_2}$
are also the same.
\figref{ch6_walsh2_tidd} shows the run of $M_{H_4}$ on the tree with leaves $\langle 0, 1, 0, 1 \rangle$ that corresponds to the assignment $[x_0 \mapsto 0, y_0 \mapsto 1, x_1 \mapsto 0, y_1 \mapsto 1]$. We can observe that the states on reading the symbols at the leaves would be $q_0$ for $0$ and $q_1$ for $1$, similarly, the states transitioned to at the next level would be $(q_0, q_1) \rightarrow q_{10}$,
and $(q_{10}, q_{11}) \rightarrow q_{20}$.

On the same lines, the \TIDD representation of $H_{2^i}$ is a tuple $M_{H_{2^i}} = (l = i, \mathcal{Q} = {\mathcal{Q}_0} \cup {\mathcal{Q}_1} \cup \ldots \cup {\mathcal{Q}_i}, \mathcal{Q}_f = {\mathcal{Q}_i}, \mathcal{F} = \{0,1,\Omega\}, \delta, \mathcal{V})$, with
\begin{itemize}
    \item $\mathcal{Q}_0 = \{q_0, q_1\}$, $\mathcal{Q}_j = \{q_{j0}, q_{j1}\}$, for $j = 1\ldots i$.
    \item $\delta$ function, for $j = 1\ldots i-1$:
    \[
    \begin{array}{cccc}
         \multicolumn{2}{c}{\delta(\mathbf{0}) \rightarrow q_0} & \multicolumn{2}{c}{\delta(\mathbf{1}) \rightarrow q_1}\\
         \delta(q_{0}, q_{0}) = q_{10} & \delta(q_{0}, q_{1}) = q_{10} & \delta(q_{1}, q_{0}) = q_{10} & \delta(q_{1}, q_{1}) = q_{11}\\
         \delta(q_{10}, q_{10}) = q_{20} & \delta(q_{10}, q_{11}) = q_{21} & \delta(q_{11}, q_{10}) = q_{21} & \delta(q_{11}, q_{11}) = q_{20}\\
         \vdots\\
         \delta(q_{j0}, q_{j0}) = q_{(j+1)0} & \delta(q_{j0}, q_{j1}) = q_{(j+1)1} & \delta(q_{j1}, q_{j0}) = q_{(j+1)1} & \delta(q_{j1}, q_{j1}) = q_{(j+1)0}\\
    \end{array}
    \]
    \item $\mathcal{V}(q_{i0}) = 1$, $\mathcal{V}(q_{i1}) = -1$
\end{itemize}
The \TIDD representation of $H_{2^i}$ has $2i + 2$ states and $4i + 2$ edges.
\end{exmp}

\paragraph{Discussion.}
Note that the trees in~\figref{ch6_walsh_tidd} do not show the symbol $\Omega$ at the internal nodes; we assume it to be implicit.
Also, we can observe that, unlike general tree automata, the automata do not have
level-0
transitions specific to variables
(or, more precisely, to the \emph{position} of a variable in an assignment), but operate solely on the
symbols $0$ and $1$.
The reason behind this technique is to make the \TIDD representation as similar to CFLOBDDs as possible.
If a \TIDD 
were to use
a different state for every variable, then the \TIDD representation of $H_{2^i}$ would have size $\mathcal{O}(2^i)$, whereas
the 
size of the representation of $H_{2^i}$ used in our work is $\mathcal{O}(i)$.

\subsection{Canoncity}
\label{Se:ch6_Canonicity}
As shown in \cite{comon2008tree}, the Myhill–Nerode theorem for tree automata implies the existence and uniqueness (up to isomorphism) of a minimal deterministic tree automaton recognizing a given tree language.
We will use the construction following the Myhill-Nerode theorem to show that \TIDDs are minimal and thereby canonical.

The Myhill--Nerode equivalence for trees induces a congruence on the set of trees.
Two trees are equivalent if and only if they are indistinguishable by all contexts:
that is, for any context $C[\;]$, plugging either tree into $C[\;]$ yields trees
that are either both accepted or both rejected.
Equivalently, a context of a subtree $t'$ in a larger tree $t$ is the surrounding
tree with a distinguished hole, and $t'$ may be replaced by another tree in that hole.
The minimal deterministic bottom-up tree automaton recognizing a tree
language is obtained by taking the Myhill--Nerode equivalence classes as states.
Transitions are defined as follows:
for a symbol $f$ of arity $k$ and equivalence classes
$[t_1], \ldots, [t_k]$, the transition on $f$ maps
$([t_1], \ldots, [t_k])$ to the equivalence class of the tree
$f(t_1, \ldots, t_k)$.

Formally, two states are equivalent if the trees represented by their
equivalence classes are indistinguishable under all contexts.
Equivalently, in terms of the transition function, if two states $q_1$ and $q_2$ are equivalent
--- and hence can be merged --- then their transition behavior must coincide.
That is,
\[
\forall q,\quad
[\delta(q_1, q)] \equiv [\delta(q_2, q)]
\;\wedge\;
[\delta(q, q_1)] \equiv [\delta(q, q_2)],
\]
where $[\cdot]$ denotes the equivalence class of a state.

We observe that this condition on the transition function
has been incorporated into constraint 2\ref{It:Congruence} that is
imposed on the transition structure of
\TIDDs (page~\pageref{It:Congruence}).
As a result, if the
states at level $i+1$ are minimal, then the states at level $i$ are also minimal,
because no two distinct states at level $i$ can satisfy the above equivalence;
that is, there are no pairs of states to merge in a \TIDD.

At the top-most level $l$, the number of states is determined by the value
function $\mathcal{V}$, which is a bijection onto the domain $\mathcal{D}$.
Consequently, the number of states at level $l$ is minimal.

Putting these observations together, we have shown that the states at the
top-most level are minimal, and that the transition function of \TIDD---being
equivalent to that of a minimal deterministic bottom-up tree automaton---ensures that minimality propagates from level $i+1$ to level $i$. Therefore,
\TIDDs are minimal by construction.

We now state the canonicity theorem for \TIDDs.

\begin{thm}[Canonicity of \TIDDs]\label{The:ch6_Canonicity}
If $M_1$ and $M_2$ are level-$l$ \TIDDs for the same Boolean function over $n=2^l$ Boolean variables, and $M_1$ and $M_2$ use the same variable ordering, then $M_1$ and $M_2$ are isomorphic.
\end{thm}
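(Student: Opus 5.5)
I plan to prove the statement by constructing an explicit isomorphism level by level, from the top down, using the functional meaning of each state. For a state $q \in \mathcal{Q}_i$ of a \TIDD $M$, let $\textit{Sub}_i(M)$ denote the set of level-$i$ states reachable by some subtree of height $i$, i.e., by running $M$ on some perfect binary tree whose $2^i$ leaves are labeled with $0/1$. By property~1 every such tree reaches some state. I will first argue, by induction upward using the ordering constraints 2\ref{It:LevelZeroStates} and 2\ref{It:LevelIPlus1States}, that every state in $\mathcal{Q}_i$ is reached by some height-$i$ tree. Since the states of $\mathcal{Q}_{i+1}$ are exactly those listed in $\textit{Seq}_{\mathcal{Q}_{i+1}}$, each is $\delta(q_l,q_r)$ for reachable $q_l,q_r$. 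To each reachable state I will attach its \emph{behavior}: the map sending each position $p$ among the $2^{l-i}$ slots of the level-$i$ cut of the full tree, together with an assignment to the remaining leaves, to the resulting function value. Equivalently, a state's behavior is the family of contexts $C[\;]$ and the values $f(C[t])$.

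The key step is to show that in a \TIDD two states at level $i$ are equal iff the height-$i$ trees reaching them have the same behavior in all contexts, which is exactly the Myhill--Nerode congruence discussed above. The ``only if'' direction holds by determinism. For the ``if'' direction, I argue by downward induction on $i$. At level $l$ the states are determined by values, since $\mathcal{V}$ is a bijection. Suppose distinct states $q_j \neq q_k$ at level $i<l$. Then constraint 2\ref{It:Congruence} yields some $q$ with, say, $\delta(q_j,q)\neq\delta(q_k,q)$. These are distinct level-$(i+1)$ states, so by induction they are distinguished by some context $C'$. Plugging a tree reaching $q$ as the right sibling gives a context distinguishing $q_j$ from $q_k$. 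The symmetric case is identical.

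Given this, I will define $\phi_i:\mathcal{Q}^{M_1}_i\to\mathcal{Q}^{M_2}_i$ by sending $q$ to the state that $M_2$ reaches on any tree that reaches $q$ in $M_1$. This is well defined and injective because both automata compute the same $f$, so the Myhill--Nerode classes coincide. It is surjective by reachability. It commutes with $\delta$ by construction, since $\delta(q_l,q_r)$ is reached by $\Omega(t_l,t_r)$. It fixes $q^0_0,q^0_1$, and it preserves $\mathcal{V}$ at level $l$. The ordering constraints then force $\phi_i$ to preserve the indices, so the names literally coincide.

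The main obstacle is a subtlety in the congruence step. Level-$i$ states are shared across all $2^{l-i}$ positions of the tree, because transitions are position-independent. A distinguishing context is therefore a context at \emph{some} position, and the Myhill--Nerode class must be taken over contexts at all positions simultaneously. I need to check that this position-agnostic equivalence is still determined by $f$ alone, and hence is identical for $M_1$ and $M_2$. The check is that contexts are just trees with a hole at height $i$ evaluated through $f$, which depends only on $f$ and the fixed variable ordering.
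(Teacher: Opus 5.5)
Your proposal is correct, but it is organized differently from the paper's proof. The paper argues bottom-up. By induction it shows that at each level $M_1$ and $M_2$ have state sequences of equal length, with the same names, and with identical down-assignment languages. It handles names exactly as in your last step: 2(ii) at level 0, and the first-occurrence Kronecker order of 2(iii) above it. Each level-$(i+1)$ language is then written as a union of concatenations $L^{\downarrow_f}(q_a)\Vert L^{\downarrow_f}(q_b)$ of level-$i$ languages.

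The pivotal fact in that induction is that both automata send the same pairs of level-$i$ states to the same level-$(i+1)$ state, i.e., induce the same partition of $P(i+1)$. This cannot be derived from the bottom-up hypothesis alone, since agreement of the level-$i$ languages says nothing about which pairs get merged above. The paper obtains it by appeal to minimality (``because $M_1$ and $M_2$ are minimal, their transition functions coincide''), resting on the informal Myhill--Nerode discussion that precedes the theorem.

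Your downward induction is exactly the argument that justifies this step. You start from injectivity of $\mathcal{V}$ and use constraint 2(v) plus reachability to push distinguishing contexts down one level at a time. This shows that each level's partition is the Myhill--Nerode congruence of $f$ with contexts ranging over all $2^{l-i}$ slots, which depends only on $f$ and the variable ordering. You also make explicit that every state is reachable (from 2(iii)), which the paper uses tacitly.

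The paper's formulation is phrased directly in terms of down-assignment languages, the notion it reuses when comparing \TIDD states with proto-CFLOBDD partitions. Yours is self-contained: the isomorphism $\phi_i$ is constructed rather than asserted, and the position-independence subtlety you flagged is confronted and resolved rather than absorbed into the word ``minimal.''
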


To prove this theorem, we first introduce and discuss two definitions: 
(i) the \emph{down language} of a state, and (ii) the \emph{down-assignment-language} of a state.

\paragraph{Down Language.}
Let $T$ represent the set of all possible perfect binary trees with $\mathbf{0}, \mathbf{1}$ as the leaves of the trees.
For an automaton $A$ that runs on $T$, the language accepted by $A$ is 
\[
L(A) = \{t \in T \mid m_A(t) \in Q_f\}
\]
where, $m_A(t)$ is the state obtained by running $A$ on $t$.
In the same way, \cite{guellouma2016efficient}
define the
down-language of a state $q$ of the automaton $A$:
\[
L^{\downarrow}(q) = \{t \in T \mid m_A(t) = q \}
\]
$L^{\downarrow}(q)$
consists of
the set of trees on which automaton $A$ reaches state $q$.\footnote{
  For a \TIDD, $L(A) = T$ because, when run on the tree for an arbitrary assignment, the \TIDD reaches \emph{some} final state (\defref{ch6_TIDDDefn}).
  In contrast, $L^{\downarrow}(q)$ allows us to make distinctions among different assignment-trees based on the
  state $q$
  reached.
}

We will now define a slight variant of the down-language. We will define a new term called
``down-assignment-language'' $L^{\downarrow_f}$. For a Boolean function $f$ over $n$ variables,
 there exists $2^n$ perfect binary trees.
 Every perfect binary tree $t$ can be considered as a one-step-deeper tree over two perfect binary trees $t_a$ and $t_b$ such that $t$ is formed using $t_a$ as the left-child and $t_b$ as the right-child, i.e., $t = t_a || t_b$.
 Let $T$ represent all such unique binary (sub-)trees of the $2^n$ perfect binary trees
 corresponding to the function $f$. Every tree $t \in T$ has a corresponding word $w$ obtained by sequentially concatenating the leaves of $t$; let us denote this operation by $w = TW(t)$ and $t = TW^{-1}(w)$.
 Let $T_w$ represent all such unique words obtained from the trees in $T$.
 The ``down-assignment-language'' $L^{\downarrow_f}$ of a state $q$ of automaton $A$ is defined as
 \[
 L^{\downarrow_f}(q) { =_{\textit{df}} } \{ w \in T_w \mid m_A(TW^{-1}(w)) = q\}.
 \]

This 
definition
implies that $L^{\downarrow_f}$ of a state $q$ represents the set of words whose corresponding trees lead to state $q$ on running automaton $A$.
These words are, in fact, partial assignments of the function $f$. So in other words,
if state $q$ belongs to $\mathcal{Q}_i$, where $i$ is the height of the state layer,
$L^{\downarrow_f}(q)$ represents the set of Boolean strings of length $2^i$ that lead to state $q$ on running automaton $A$.

For a Boolean function $f$, because \TIDDs are deterministic, every $w \in T_w$ and correspondingly, every $t \in T$ belong to the $L^{\downarrow_f}$ of exactly one state, i.e.,
let $|.|$ represent the length of a word, and let $P(i)$ represent the set of all Boolean-words of length $2^i$, then
\[
\forall q \in \mathcal{Q}_i, \forall w \in L^{\downarrow_f}(q), |w| = 2^i
\]
\[
\forall w \in P(i), \exists!\, q \in \mathcal{Q}_i \text{ such that } w \in L^{\downarrow_f}(q)
\]

The states ($p$) of a \TIDD at every level-$i$ partition the space of $P(i)$ into $p$ partitions. That is, they satisfy the following properties:
\begin{enumerate}[label=(\roman*)]
    \item Any two states at the same level have different down-assignment-languages. Let the \TIDD have $l+1$ levels,
        \[
            \forall i \in \{0\ldots l \}, \forall q_j,q_k \in \mathcal{Q}_i, q_j \neq q_k \implies L^{\downarrow_f}(q_j) \cap L^{\downarrow_f}(q_k) = \phi
        \]
    \item The union of down-assignment-languages of all states at level-$i$ is equal to $P(i)$.
        \[
            \forall i \in \{0\ldots l\}, \bigcup\limits_{q \in \mathcal{Q}_i}L^{\downarrow_f}(q) = P(i)
        \]
\end{enumerate}

We will now give the proof for~\theoref{ch6_Canonicity}.

\begin{proof}

Consider two \TIDDs $M_1$ and $M_2$ that represent the same pseudo-Boolean function.

\paragraph{Base case: Level 0.}
By condition 2\ref{It:LevelZeroStates} that is imposed on the transition structure of \TIDDs (page~\pageref{It:LevelZeroStates}), $M_1$ and $M_2$ have the same level-0 states, namely, $q_0^0$ when a symbol is 0, and $q_1^0$ when a symbol is 1, with $\textit{Seq}_{\mathcal{Q}_0} = [q^0_0, q^0_1]$.

For use in satisfying the hypothesis of the inductive step, note that the corresponding down-assignment-languages are identical:
\[
  L^{\downarrow_{M_1}}(\textit{Seq}_{\mathcal{Q}_0}[j])
  =
  L^{\downarrow_{M_2}}(\textit{Seq}_{\mathcal{Q}_0}[j]), \textrm{ for } j \in \{ 0,1 \}.
\]

\paragraph{Inductive step: Level $i$ to level $i+1$.}
Inductive Hypothesis:
Assume that at level $i$ the sequences of states in $M_1$ and $M_2$ are such that (i) they are the same length;
(ii) states in the two \TIDDs are named according to position: $\textit{Seq}_{\mathcal{Q}_i} = [q_0^i, \ldots, q_{|\textit{Seq}_{\mathcal{Q}_i}|-1}^i]$; and
(iii) for all $0 \leq j \leq |\textit{Seq}_{\mathcal{Q}_i}|-1$, the corresponding down-assignment-languages are identical:
\[
  L^{\downarrow_{M_1}}(\textit{Seq}_{\mathcal{Q}_i}[j])
  =
  L^{\downarrow_{M_2}}(\textit{Seq}_{\mathcal{Q}_i}[j]).
\]

Consider the states at level~$i+1$.
The states of $M_1$ and $M_2$ at level~$i+1$ are uniquely determined by the transition
functions from level~$i$ to level~$i+1$.
Because $M_1$ and $M_2$ are minimal, their transition functions coincide.
Consequently, the number of states at level~$i+1$ in $M_1$ and $M_2$ is the same; in particular,

\begin{enumerate}
    \item[(i)] the sequences of states in $M_1$ and $M_2$ at level~$i+1$ have the same length.
\end{enumerate}

The ordering of states at level~$i+1$ is determined by
condition~2\ref{It:LevelIPlus1States}
on the transition structure of \TIDDs
(page~\pageref{It:LevelIPlus1States}).
Specifically, the sequence of states at level~$i+1$ is given by
\[
\textit{Seq}_{\mathcal{Q}_{i+1}}
=
\bigl[\, \delta(q_l, q_r) \;\big|\;
(q_l, q_r) \in \textit{Seq}_{\mathcal{Q}_i} \otimes \textit{Seq}_{\mathcal{Q}_i}
\,\bigr],
\]
where $\textit{Seq}_{\mathcal{Q}_i} \otimes \textit{Seq}_{\mathcal{Q}_i}$ denotes the
Kronecker product of sequences, pairing elements from
$\textit{Seq}_{\mathcal{Q}_i}$.
Because $\textit{Seq}_{\mathcal{Q}_i}$ is identical in $M_1$ and $M_2$, it follows that
$\textit{Seq}_{\mathcal{Q}_{i+1}}$ is also identical in both automata,
which establishes
property~(ii).

Consider the down-assignment-language of each state of $M_1$ and $M_2$ at level~$i+1$.
Let $q^{i+1}_{j} \in \textit{Seq}_{\mathcal{Q}_{i+1}}$ be a state at level~$i+1$, and let
\[
\bigl[
(q^{i}_{a_1}, q^{i}_{b_1}) \rightarrow q^{i+1}_j,\,
(q^{i}_{a_2}, q^{i}_{b_2}) \rightarrow q^{i+1}_j,\,
\ldots,\,
(q^{i}_{a_k}, q^{i}_{b_k}) \rightarrow q^{i+1}_j
\bigr]
\]
be the list of transitions leading to $q^{i+1}_j$, where
$q^{i}_{a_1}, \ldots, q^{i}_{a_k}, q^{i}_{b_1}, \ldots, q^{i}_{b_k}
\in \textit{Seq}_{\mathcal{Q}_{i}}$.
Then the down-assignment language of $q^{i+1}_j$ in $M_1$ is given by
\[
L^{\downarrow_{M_1}}(q^{i+1}_j)
=
\bigcup_{k'=1}^{k}
L^{\downarrow_{M_1}}(q^{i}_{a_{k'}})
\;\Vert\;
L^{\downarrow_{M_1}}(q^{i}_{b_{k'}}),
\]
where $\Vert$ denotes the concatenation of words.

Equivalently, for a state $\textit{Seq}_{\mathcal{Q}_{i+1}}[j]$,
the down-assignment language depends only on the down-assignment languages
of the states in $\textit{Seq}_{\mathcal{Q}_{i}}$, and can be written as
\[
L^{\downarrow_{M_1}}(\textit{Seq}_{\mathcal{Q}_{i+1}}[j])
=
\!\!\!\!
\bigcup_{\delta(\textit{Seq}_{\mathcal{Q}_{i}}[k_1],
                 \textit{Seq}_{\mathcal{Q}_{i}}[k_2])
        \rightarrow
        \textit{Seq}_{\mathcal{Q}_{i+1}}[j]}
\!\!\!\!
L^{\downarrow_{M_1}}(\textit{Seq}_{\mathcal{Q}_{i}}[k_1])
\;\Vert\;
L^{\downarrow_{M_1}}(\textit{Seq}_{\mathcal{Q}_{i}}[k_2]).
\]

An analogous equation holds for
$L^{\downarrow_{M_2}}(\textit{Seq}_{\mathcal{Q}_{i+1}}[j])$.
By the induction hypothesis, for all
$0 \leq j \leq |\textit{Seq}_{\mathcal{Q}_{i}}|-1$, the corresponding
down-assignment languages are identical:
\[
L^{\downarrow_{M_1}}(\textit{Seq}_{\mathcal{Q}_{i}}[j])
=
L^{\downarrow_{M_2}}(\textit{Seq}_{\mathcal{Q}_{i}}[j]).
\]
Therefore, for all
$0 \leq j \leq |\textit{Seq}_{\mathcal{Q}_{i+1}}|-1$, we have
\[
L^{\downarrow_{M_1}}(\textit{Seq}_{\mathcal{Q}_{i+1}}[j])
=
L^{\downarrow_{M_2}}(\textit{Seq}_{\mathcal{Q}_{i+1}}[j]),
\]
which establishes
property~(iii).

\paragraph{Top level $l$.}
At the top level~$l$, by the induction hypothesis, the state sequences in $M_1$ and $M_2$
have the same length, the same ordering, and identical down-assignment languages.
At level~$l$, the value function $\mathcal{V}$ induces a bijection from the states to the
domain $\mathcal{D}$.
Because $\textit{Seq}_{\mathcal{Q}_{l}}$ is identical in $M_1$ and $M_2$, the association of each
state $\textit{Seq}_{\mathcal{Q}_{l}}[j]$ with its value $d \in \mathcal{D}$ is the same in both
automata.
Therefore, $M_1$ and $M_2$ have identical value functions at the top level.

Using the above induction,
it follows that the representations of $M_1$ and $M_2$ are identical, and hence
the \TIDD for a pseudo-Boolean function $f$ is a canonical representation of $f$.
\end{proof}

\subsection{Relationship of $L^{\downarrow_f}$ to Proto-CFLOBDDs}
\label{Se:ch6_CFLOBDDRelation}

This section examines the substructures of both \TIDDs and CFLOBDDs from the standpoint of their handling of partitions of the set of strings of length $2^i$.
It shows that (proto-)CFLOBDDs have some additional flexibility that sub-automata of \TIDDs lack.
This understanding becomes important in~\sectrefs{ch6_separation_intuition}{ch6_separation_example},
where we investigate
the differences in the sizes of the \TIDD and CFLOBDD
that represent a given function $f$.

As discussed in~\cite{2211.06818}, a proto-CFLOBDD $C$ at level-$i$ with $k$ exit vertices partitions the space of $2^{2^i}$ strings of length $2^i$, $P(i)$, into $k$ partitions.
Consider
two proto-CFLOBDDs $C$ and $C'$ at level-$i$ that are a part of a bigger CFLOBDD 
$C_g$ that represents a function $g$.
Because of the Contextual-Interpretation Principle, in the context of $C_g$, $C$ and $C'$ can be over the same set of variables, a different set of variables, or both.
In any case, $C$ and $C'$
encode different local sub-functions, and thereby partition $P(i)$ differently.
Moreover,
the partition space of $C$ 
need
not have any correlation with the partition space of $C'$.

In contrast, 
because the transitions of a \TIDD are not specific to the underlying variables, i.e., the same state is reached when \TIDD is run on two trees over different set of variables but with the same assignments, the partition space -- the down-assignment language -- of a \TIDD is
has to be finer than the partition space
of every proto-CFLOBDD at the same level.
That is, in a \TIDD for the function $g$ considered above, there would be a sub-\TIDD that has to be prepared to mimic both $C$ and $C'$.
Consequently, the partition space of the sub-\TIDD has to be finer than both partition spaces -- $C$ and $C'$.

This property can be stated another way.
Let $p_c$ be the number of groupings at level-$l$;
let $\llbracket P_i \rrbracket$ denote the partition space of the $i^{th}$ grouping at level-$l$;
let $k_i$ denote the number of partitions in $\llbracket P_i \rrbracket$;
and let $\llbracket P_i \rrbracket [k]$ denote the $k^{th}$ partition.
If two words $w_1, w_2$ belong to the same $L^{\downarrow_f}(q)$ for some $q \in \mathcal{Q}_l$ of a \TIDD, then they must belong to the same partition in all $\llbracket P_i \rrbracket$ partition spaces and if  two words $w_1, w_2$ do not belong to the same $L^{\downarrow_f}(q)$, then there must exist at least one partition space $\llbracket P_i \rrbracket$ such that $w_1, w_2$ do not belong to the same partition in $\llbracket P_i \rrbracket$.
\[
\forall w_1, w_2 \in L^{\downarrow_f}(q) \text{, for some $q \in \mathcal{Q_i}$}, 
\forall i \in \{1\ldots p_c\}, \exists k \in \{1\ldots k_{i}\}, w_1, w_2 \in \llbracket P_i \rrbracket [k]
\]
and,
\[
\begin{split}
\forall w_1, w_2, (w_1 \in L^{\downarrow_f}(q_1) \neq w_2 \in L^{\downarrow_f}(q_2) \text{, for some $q_1 \neq q_2 \in \mathcal{Q}_i$}) {\implies} \\
\qquad (\exists i \in \{1\ldots p_c\}, \exists {k_1, k_2} \in \{1\ldots k_{i}\}, {k_1 \neq k_2 \land} w_1 \in \llbracket P_i \rrbracket [k_1] \wedge w_2 \in \llbracket P_i \rrbracket [k_2])    
\end{split}
\]

In summary,
the set of states at each level-$l$ of the \TIDD for a function $f$ 
corresponds to the coarsest partitioning that is as fine as
the partition space of each grouping at level-$l$ of the CFLOBDD that represents $f$.

\section{Operations using \TIDDs}
\label{Se:ch6_ops}

In this section, we provide an overview of the algorithms for the operations using \TIDDs.
\TIDDs are represented in memory similar to that of CFLOBDDs.
Every level $i$ of the \TIDD (a state layer) is an object pointing to (1) the state layer at $i-1$, and (2) a list of lists of state numbers (numbered according to a fixed ordering) representing the hyper-edges or the transitions.
If the transitions between two levels -- level-$1$ and level-$2$ -- are
$\delta(q_{10}, q_{10}) \rightarrow q_{20}$, $\delta(q_{10}, q_{11}) \rightarrow q_{20}$, $\delta(q_{11}, q_{10}) \rightarrow q_{21}$, and $\delta(q_{11}, q_{11}) \rightarrow q_{21}$, the transitions are represented as:
$E = [[0,0], [1,1]]$
(a two-dimensional array in row-major order),\footnote{
  Note: We use 0-based array indexing.
}
where $E[0][0] = 0$ represents $\delta(q_{10}, q_{10}) \rightarrow q_{20}$,
$E[0][1] = 0$ represents $\delta(q_{10}, q_{11}) \rightarrow q_{20}$,
$E[1][0] = 1$ represents $\delta(q_{11}, q_{10}) \rightarrow q_{21}$, and
$E[1][1] = 1$ represents $\delta(q_{11}, q_{11}) \rightarrow q_{21}$.
In general, the transitions between two levels -- $i$ and $i+1$ is represented using a 2D array $E$ (a list of lists),
where $E[a][b] = c$ represents the transition $\delta(q_{ia}, q_{ib}) \rightarrow q_{(i+1)c}$.

\begin{figure}[tb!]
    \centering
    \includegraphics[width=0.75\linewidth]{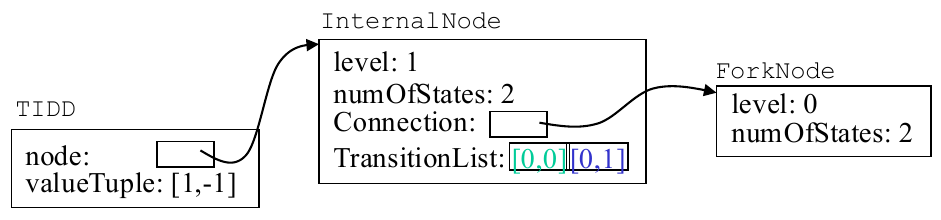}
    \caption{Object-oriented representation of \TIDD for $H_2$.
    }
    \label{Fi:tidd_oops}
\end{figure}

The algorithms for \TIDDs follow a recursive approach — either bottom-up or top-down — and are similar to the algorithms for tree automata, except that they maintain a total order at every level, thereby ensuring canonicity.
\figref{tidd_oops} shows the object-oriented representation of the \TIDD for $H_2$. As discussed in~\exref{ch6_hadamard}, $H_2$ has two states at level-$0$: $q_0, q_1$, and two states at level-$1$: $q_{10}, q_{11}$, with transitions $\delta(q_0, q_0) \rightarrow q_{10}$, $\delta(q_0, q_1) \rightarrow q_{10}$, $\delta(q_1, q_0) \rightarrow q_{10}$, and $\delta(q_1, q_1) \rightarrow q_{11}$. 
The transitions are
represented as a TransitionList: $[[0,0],[0,1]]$.
Every state layer at level-$l$ ($>0$) is represented as an $\tt InternalNode(l)$, and the nodes at level-$0$ are of two kinds: $\tt ForkNode$ (with two states), and $\tt DontCareNode$ (with one state).

\paragraph{Pragmatics.} Techniques such as hash-consing to maintain unique representations, function caching, and equality testing of state layers, and transitions are used and follow a similar pattern to that of CFLOBDDs (and WCFLOBDDs).

\begin{center}
\smallskip
\noindent
{\small{{\tt
\begin{minipage}{\columnwidth}
\begin{tabbing}
Op\=(Node g) \{ \+ \\
    $\ldots$       // Base case: Construct appropriate level-0 node \\
    Internal\=Node g' = new InternalNode(k);  // k = g.level \\
    // Recursive calls on Op(g.Connection) \\
    $\ldots$ \\
    return RepresentativeNode(g'); \- \\
\}
\end{tabbing}
\end{minipage}
}}}
\end{center}

We will define proto-\TIDDs that will be useful in a better understanding of the algorithms.
\begin{defn}
    A proto-\TIDD is similar to a \TIDD with all the constraints except for the final value function $\mathcal{V}$.
    A proto-\TIDD is of the form $M' = (l, \mathcal{Q} = \mathcal{Q}_0 \cup \ldots \cup \mathcal{Q}_l, \mathcal{F} = \{0,1,\Omega\}, \delta)$.
\end{defn}

We will now provide a sketch for algorithms for the operations using \TIDDs.

\subsection{Constant Functions}

\begin{algorithm}[tb!]
\caption{ConstantTIDD \label{Fi:ch6_ConstantTIDD}}
\Input{int l (level), Value $v$,}
\Output{\TIDD representation of a function with $2^l$ variables and constant value $v$}
\Begin{
\Return RepresentativeTIDD(NoDistinctionProtoTIDD(k), [$v$])\;
}
\end{algorithm}

\begin{algorithm}[tb!]
\caption{NoDistinctionProtoTIDD\label{Fi:ch6_NoDistinctionProtoTIDDAlgorithm}}
\Input{int l (level)}
\Output{Proto-TIDD representation of a function with $2^l$ variables}
\Begin{
\If{l == 0}{\Return RepresentativeDontCareNode\;}
InternalNode g = new InternalNode(l)\;
g.numOfStates = 1\;
g.Connection = NoDistinctionProtoTIDD(l-1)\;
g.TransitionList = [[0,0]]\;
\Return RepresentativeNode(g)\;
}
\end{algorithm}

\begin{algorithm}[tb!]
\caption{FalseTIDD\label{Fi:ch6_FalseTIDD}}
\Input{int l (level)}
\Output{\TIDD representation of a function with $2^l$ variables and constant value $F$}
\Begin{
\Return ConstantTIDD(k, $F$);
}
\end{algorithm}

\begin{algorithm}[tb!]
\caption{TrueTIDD\label{Fi:ch6_TrueTIDD}}
\Input{int l (level)}
\Output{\TIDD representation of a function with $2^l$ variables and constant value $T$}
\Begin{
\Return ConstantTIDD(k, $T$);
}
\end{algorithm}

The constant functions $f_0(x) = \lambda x.\bar{0}$ and $f_1(x) = \lambda x.\bar{1}$ have exactly one state at every level.
The \TIDDs for $f_0$ and $f_1$ over $n$ variables have $\log(n)$ levels, with the only difference in $\mathcal{V}$.
The \TIDD representation of $f_0$ and $f_1$, where $n = 2^l$ is a tuple $M = (l = \log(n), \mathcal{Q} = {\mathcal{Q}_0} \cup {\mathcal{Q}_1} \cup \ldots \cup {\mathcal{Q}_l}, \mathcal{Q}_f = {\mathcal{Q}_l}, \mathcal{F} = \{0,1,\Omega\}, \delta, \mathcal{V})$, with
\begin{itemize}
    \item $\mathcal{Q}_0 = \{q_0\}$, $\mathcal{Q}_j = \{q_{j0}\}$, for $j = 1\ldots l$.
    \item $\delta$ function, for $j = 1\ldots l-1$:
    \[
    \begin{array}{cc}
         \delta(\mathbf{0}) \rightarrow q_0 & \delta(\mathbf{1}) \rightarrow q_0\\
         \multicolumn{2}{c}{\delta(q_{0}, q_{0}) = q_{10}}\\
         \multicolumn{2}{c}{\delta(q_{10}, q_{10}) = q_{20}}\\
         \multicolumn{2}{c}{\vdots}\\
         \multicolumn{2}{c}{\delta(q_{j0}, q_{j0}) = q_{(j+1)0}}\\
    \end{array}
    \]
    \item $\mathcal{V}(q_{l0}) = 0$ for $f_0$, and $\mathcal{V}(q_{l0}) = 1$ for $f_1$.
\end{itemize}

The TIDD-creation operations~\algrefsp{ch6_ConstantTIDD}{ch6_FalseTIDD}{ch6_TrueTIDD} internally call
{\tt NoDistinctionProtoTIDD }~\algref{ch6_NoDistinctionProtoTIDDAlgorithm}, which when paired with value tuples $[ 0]$ and $[1]$
(or $[F]$ and $[T]$)
form the \TIDDs for the functions $f_0$ and $f_1$ respectively.

\subsection{Projection Functions}
A second family of creation operations is \emph{(single-variable) projection functions} of the form
$\lambda x_0, x_1, \ldots, x_{n-1} . x_i$, where $i$ ranges from $0$ to $n-1$.

\begin{algorithm}[tb!]
\SetKwFunction{ProjectionTIDD}{ProjectionTIDD}
\SetKwProg{myalg}{Algorithm}{}{end}
  \myalg{\ProjectionTIDD{l, i}}{
  \Input{int k (level), int i (index)}
  \Output{\TIDD representing function $\lambda x_0, x_1, \ldots, x_{n-1} . x_i$}
  \Begin{
    \eIf{i == 0}{
    \Return RepresentativeTIDD(ProjectionProtoTIDD(k,i), [T,F])\;
    }{
    \Return RepresentativeTIDD(ProjectionProtoTIDD(k,i), [F,T])\;
    }
  }
  }{}
  \caption{ProjectionTIDD\label{Fi:ch6_ProjectionTIDDAlgorithm}}
\end{algorithm}

\begin{algorithm}[tb!]
\SetKwFunction{ProjectionProtoTIDD}{ProjectionProtoTIDD}
  \setcounter{AlgoLine}{0}
  \SetKwProg{myproc}{SubRoutine}{}{end}
  \myproc{\ProjectionProtoTIDD{k, i}}{
  \Input{int k (level), int i (index)}
  \Output{Node g representing function $\lambda x_0, x_1, \ldots, x_{2^k-1} . x_i$}
  \Begin{
    \eIf{k == 0}{\Return RepresentativeForkNode\;}
    {
        InternalNode g = new InternalNode(k)\;
        \eIf{i $<$ 2$\ast \ast$(k-1)}{
            g.Connection = ProjectionProtoTIDD(k-1,i)\;
            g.TransitionList = [[0,0],[1,0]]\;
            g.numOfStates = 2\;
        }
        {
            i' = i - 2$\ast \ast$(k-1)\;
            g.Connection = ProjectionProtoTIDD(k-1,i')\;
            g.TransitionList = [[0,1],[0,0]]\;
            g.numOfStates = 2\;
        }
        \Return RepresentativeGrouping(g)\;
    }
  }
  }
  \caption{ProjectionProtoTIDD\label{Fi:ch6_ProjectionTIDDAlgorithmContinued}}
\end{algorithm}

\algrefs{ch6_ProjectionTIDDAlgorithm}{ch6_ProjectionTIDDAlgorithmContinued} show the algorithms for creating Projection functions using \TIDDs.
The level-$0$ node is always $\tt RepresentativeForkNode$
because there need to
be two states at level $0$.
At every level
$k > 0$,
the transition list depends on $i < 2^{k-1}$. If $i \geq 2^{k-1}$, then transition list: $[[0,0],[1,0]]$, else transition list: $[[0,1],[0,0]]$.

\subsection{Unary Operations}
\label{Se:ch6_unaryOps}
\textbf{Scalar Multiplication}. Multiplication of a scalar value $c$ with a \TIDD $M$ 
that represents
function $f$ can be computed by constructing 
the \TIDD $M_c$ that represents the
constant function $\lambda x_0, x_1, \ldots, x_{2^k-1} . c$ (where $2^k$ is the number of variables of $f$), and 
multiplying it
with $M$.
The product
$M \times M_c$ can be computed using the construction provided in~\sectref{ch6_binaryOps}.

\subsection{Binary Operations}
\label{Se:ch6_binaryOps}

The binary operation $\tt op$ applied to two TIDDs $M_1$ and $M_2$, yielding
$M = M_1 \,{\tt op}\, M_2$, is defined via a recursive procedure over the
levels of the input TIDDs. The construction consists of two phases:
(1) a \emph{cross-product (pair-product)} of the states of $M_1$ and $M_2$,
followed by
(2) a \emph{minimization (reduction)} of the TIDD obtained from the
cross-product.

Both the cross-product construction and the subsequent minimization
closely follow the standard product and reduction techniques developed
in the tree-automata literature.

Let
\[
M_1 = (l, \mathcal{Q} = \mathcal{Q}_0 \cup \cdots \cup \mathcal{Q}_l,
\mathcal{F} = \{0,1,\Omega\}, \delta_{\mathcal{Q}}, \mathcal{V}_{\mathcal{Q}})
\]
and
\[
M_2 = (l, \mathcal{P} = \mathcal{P}_0 \cup \cdots \cup \mathcal{P}_l,
\mathcal{F} = \{0,1,\Omega\}, \delta_{\mathcal{P}}, \mathcal{V}_{\mathcal{P}})
\]
be two 
level-$l$
TIDDs.
The \emph{cross-product construction} of $M_1$ and $M_2$ proceeds in a bottom-up
fashion, where states of the resulting TIDD are annotated with pairs of states
from the operand TIDDs as metadata.

Let
\[
M_{PP} = (l, \mathcal{R} = \mathcal{R}_0 \cup \cdots \cup \mathcal{R}_l,
\mathcal{F} = \{0,1,\Omega\}, \delta_{\mathcal{R}}, \mathcal{V}_{\mathcal{R}})
\]
denote the TIDD obtained from the cross-product construction.

\begin{enumerate}
    \item \textbf{Level~0.}
    The set of states $\mathcal{R}_0$ is determined by the Cartesian product of
    $\mathcal{Q}_0$ and $\mathcal{P}_0$, with each state in $\mathcal{R}_0$
    corresponding to a pair of states from the two input TIDDs:

\[
  \begin{array}{ll|rl}
    \multicolumn{1}{c}{\mathcal{Q}_0} & \multicolumn{1}{c}{\mathcal{P}_0} & \multicolumn{2}{c}{\mathcal{R}_0} \\
    \hline
    \{q_0\} & \{p_0\}
         & \{r_0\} &= \{(q_0, p_0)\} \\
    \hline
    \{q_0, q_1\} & \{p_0\}
         & \{r_0, r_1\} &= \{(q_0, p_0), (q_1, p_0)\} \\
    \hline
    \{q_0\} & \{p_0, p_1\}
         & \{r_0, r_1\} &= \{(q_0, p_0), (q_0, p_1)\} \\
    \hline
    \{q_0, q_1\} & \{p_0, p_1\}
         & \{r_0, r_1\} &= \{(q_0, p_0), (q_1, p_1)\} \\
    \hline
  \end{array}
\]

In an object-oriented representation, the above construction can equivalently
be expressed using explicit metadata that records, for each state at a given
level, the corresponding pair of operand states:

\[
  \begin{array}{ll|ll}
    \multicolumn{1}{c}{\mathcal{Q}_0} & \multicolumn{1}{c}{\mathcal{P}_0} &
    \multicolumn{1}{c}{\mathcal{R}_0} & \multicolumn{1}{c}{\text{Metadata}} \\
    \hline
    {\tt DontCareNode} & {\tt DontCareNode}
         & {\tt DontCareNode} & [(0,0)] \\
    \hline
    {\tt ForkNode} & {\tt DontCareNode}
         & {\tt ForkNode} & [(0,0), (1,0)] \\
    \hline
    {\tt DontCareNode} & {\tt ForkNode}
         & {\tt ForkNode} & [(0,0), (0,1)] \\
    \hline
    {\tt ForkNode} & {\tt ForkNode}
         & {\tt ForkNode} & [(0,0), (1,1)] \\
    \hline
  \end{array}
\]

The indices appearing in the metadata correspond to
$\textit{Seq}_{\mathcal{R}_0}$
the canonical numbering of 
level-0 states in $\mathcal{R}$ induced by
$\textit{Seq}_{\mathcal{P}_0}$ and $\textit{Seq}_{\mathcal{Q}_0}$,
which ensures the uniqueness of
the representation
of $\mathcal{R}$.

\item \textbf{Levels $i > 0$.}
Each state at level~$i>0$ of the cross-product TIDD corresponds to a pair of
states from the operand TIDDs. Formally, the set of states at level~$i$ satisfies
\[
\mathcal{R}_i \subseteq \mathcal{Q}_i \times \mathcal{P}_i,
\]
where a state $r \in \mathcal{R}_i$ is represented by the pair
$r = (q, p)$, with $q \in \mathcal{Q}_i$ and $p \in \mathcal{P}_i$.
(Equivalently, this pair is stored as metadata associated with~$r$.)

The transition function of the cross-product TIDD is defined component-wise.
For states $r_a = (q_a, p_a)$ and $r_b = (q_b, p_b)$, let
\[
\delta_{\mathcal{Q}}(q_a, q_b) = q_c
\quad \text{and} \quad
\delta_{\mathcal{P}}(p_a, p_b) = p_c.
\]
Then the transition function $\delta_{\mathcal{R}}$ is given by
\[
\begin{aligned}
\delta_{\mathcal{R}}(r_a, r_b)
  &= \delta_{\mathcal{R}}\bigl((q_a, p_a), (q_b, p_b)\bigr) \\
  &= \bigl(\delta_{\mathcal{Q}}(q_a, q_b), \delta_{\mathcal{P}}(p_a, p_b)\bigr) \\
  &= (q_c, p_c).
\end{aligned}
\]

In an object-oriented implementation, this information is maintained as pairs
of canonical state identifiers at each level, and is used to construct the states
and transitions at the next higher level.

\item \textbf{Level~$l$.}
At the top-most level, the value function of the cross-product TIDD is defined
by combining the values of the corresponding operand states
via operation $\tt op$.
For a state $r = (q, p) \in \mathcal{R}_l$,
\[
\mathcal{V}_{\mathcal{R}}(r)
  = \mathcal{V}_{\mathcal{R}}(q, p)
  = \mathcal{V}_{\mathcal{Q}}(q) \,{\tt op}\, \mathcal{V}_{\mathcal{P}}(p).
\]

\end{enumerate}

The next step consists of \emph{reducing} (or \emph{minimizing}) the TIDD
resulting from the cross-product construction to obtain a minimal,
canonical TIDD.
The minimization procedure follows the standard equivalence-based reduction derived from the Myhill--Nerode theorem,
while also generating, at each level $i$, the canonical naming of states $\textit{Seq}_{\mathcal{R}_i} = [q_0^i, \ldots, q_{|\textit{Seq}_{\mathcal{R}_i}|-1}^i]$.

At the top-most level~$l$, states are merged so as to obtain a one-to-one and onto
correspondence with the output range of the value function
$\mathcal{V}_{\mathcal{R}}$. Let $(r, v)$ denote a state $r$ associated with value
$v$.
For example, suppose that the set of state--value pairs at level~$l$ of
$M_1 \,{\tt op}\, M_2$ is
\[
\{(r_1, v_1), (r_2, v_2), (r_3, v_1)\}.
\]
Since states $r_1$ and $r_3$ are associated with the same value $v_1$, they are
merged, yielding the minimized set of state--value pairs
\[
\{(r_1, v_1), (r_2, v_2)\}.
\]

More precisely, the sequence $Seq_{\mathcal{Q}_i}$ induces a total ordering on the
states at level~$i$. Rewriting the above example in terms of this ordered
representation, if the sequence of state--value pairs at level~$l$ is
\[
[(r_1, v_1), (r_2, v_2), (r_3, v_1)],
\]
then the minimized \TIDD contains the sequence
\[
[(r_1, v_1), (r_2, v_2)],
\]
where state $r_3$ is merged into $r_1$, the leftmost occurrence of a state--value
pair with value $v_1$.

In general, when multiple state--value pairs share the same value $v$, they are
merged into a single representative $(r, v)$, chosen to be the leftmost
occurrence in the ordered sequence. This merging strategy is analogous to the greedy left-folding used in CFLOBDDs.

The equivalence information obtained at level~$l$ is then propagated downward
through the TIDD. State merging is performed iteratively at each lower level,
using the induced equivalence relation on transitions, until level~$0$ is
reached or no further minimization is possible.
As discussed in the case of the top-most level~$l$, the merging of a collection of states $\{ q_i, q_j, q_k\}$, where the state-sequence at a given level is $[\ldots, q_i, \ldots, q_j, \ldots, q_k, \ldots]$, the leftmost instance---here, $q_i$---becomes the representative of the collection.

\subsection{Matrix Multiplication}
Similar to the discussion in~\cite[\S7]{2211.06818}, we use an interleaved variable ordering of the row and column variables to represent matrices.\footnote{In the case of vectors, an ascending or descending order of rows (or columns, based on the representation) is used.}
In this section, we present an algorithm for matrix multiplication using
\TIDDs.
The algorithm proceeds in a recursive, bottom-up manner, incrementally
constructing and propagating information at each level in the form
$(q, p, w)$, where $q$ and $p$ denote states of the operand \TIDDs and
$w$ is the associated weight.
At the top-most level, each tuple $(q, p, w)$ is resolved to a value, after
which the standard reduction procedure is applied to obtain the resulting
\TIDD.

Let
\[
M_1 = (l, \mathcal{Q} = \mathcal{Q}_0 \cup \cdots \cup \mathcal{Q}_l,
\mathcal{F} = \{0,1,\Omega\}, \delta_{\mathcal{Q}}, \mathcal{V}_{\mathcal{Q}})
\]
and
\[
M_2 = (l, \mathcal{P} = \mathcal{P}_0 \cup \cdots \cup \mathcal{P}_l,
\mathcal{F} = \{0,1,\Omega\}, \delta_{\mathcal{P}}, \mathcal{V}_{\mathcal{P}})
\]
be two TIDDs of the same height~$l$.

Let
\[
M_{3} = (l, \mathcal{R} = \mathcal{R}_0 \cup \cdots \cup \mathcal{R}_l,
\mathcal{F} = \{0,1,\Omega\}, \delta_{\mathcal{R}}, \mathcal{V}_{\mathcal{R}})
\]
denote the TIDD obtained from the bottom-up computation of matrix multiplication on $M_1$ and $M_2$.

\begin{enumerate}
    \item \textbf{Level~1.}
    Let us assume that $\mathcal{Q}_0$ and $\mathcal{P}_0$ have two states each at level-$0$.
    If either operand has only a single state, the construction reduces to a trivial special case of the algorithm described below; we therefore omit a detailed discussion of this case. In this setting, the level-0 states of the product TIDD satisfy $\delta_\mathcal{R} (\textbf{0}) \rightarrow r_0$, and $\delta_\mathcal{R} (\textbf{1}) \rightarrow r_1$.
    
    The set of states $\mathcal{R}_1$ is determined as follows:

\[
  \begin{array}{l|r}
    \multicolumn{2}{c}{\mathcal{R}_1}\\
    \hline
    r_{00} & (\delta_\mathcal{Q}(q_0, q_0), \delta_\mathcal{P}(p_0, p_0), 1) + (\delta_\mathcal{Q}(q_0, q_1), \delta_\mathcal{P}(p_1, p_0), 1) \\
    \hline
    r_{01} & (\delta_\mathcal{Q}(q_0, q_0), \delta_\mathcal{P}(p_0, p_1), 1) + (\delta_\mathcal{Q}(q_0, q_1), \delta_\mathcal{P}(p_1, p_1), 1) \\
    \hline
    r_{10} & (\delta_\mathcal{Q}(q_1, q_0), \delta_\mathcal{P}(p_0, p_0), 1) + (\delta_\mathcal{Q}(q_1, q_1), \delta_\mathcal{P}(p_1, p_0), 1) \\
    \hline
    r_{11} & (\delta_\mathcal{Q}(q_1, q_0), \delta_\mathcal{P}(p_0, p_1), 1) + (\delta_\mathcal{Q}(q_1, q_1), \delta_\mathcal{P}(p_1, p_1), 1) \\
    \hline
  \end{array}
\]

Each state in $\mathcal{R}_1$ is represented as a formal sum of weighted triples
$(q, p, w)$, where $q \in \mathcal{Q}_1$, $p \in \mathcal{P}_1$, and
$w \in \mathbb{N}$ is the associated weight.

\medskip
\noindent
\textbf{Semantics (Matrix Multiplication).}
Each triple $(q, p, w)$ corresponds to a contribution of weight $w$ to the
product of the matrix entries represented by states $q$ and $p$.
Specifically, for fixed row index $i$ and column index $j$, the state $r_{ij}$
encodes the summation
\[
r_{ij}
\;\equiv\;
\sum_{k \in \{0,1\}}
\bigl(\delta_{\mathcal{Q}}(q_i, q_k),
      \delta_{\mathcal{P}}(p_k, p_j),
      1\bigr),
\]
which mirrors the standard matrix multiplication rule
\[
(C)_{ij}
\;=\;
\sum_{k} (A)_{ik} \cdot (B)_{kj}.
\]

\medskip
\noindent
\textbf{Addition of triples.}
Given two triples $(q_a, p_a, w_a)$ and $(q_b, p_b, w_b)$, their sum is defined as
\[
(q_a, p_a, w_a) + (q_b, p_b, w_b)
=
\begin{cases}
(q_a, p_a, w_a + w_b),
& \text{if } q_a = q_b \text{ and } p_a = p_b, \\
\text{two distinct triples},
& \text{otherwise}.
\end{cases}
\]
When merging triples with identical state pairs, the representative $(q,p)$ is
chosen according to the fixed total order used for canonicalization.

\medskip
\noindent
Thus, at each level of the construction, the algorithm maintains a unique set of
state pairs $(q,p)$ together with an accumulated weight, exactly capturing the
additive structure of matrix multiplication within the \TIDD framework.

Note that states $r_{00}$, $r_{01}$, $r_{10}$, and $r_{10}$ may be merged whenever their associated sets of triples are identical.

\item \textbf{Levels $i > 0$.}
Each state at level~$i>0$ is represented as a formal sum of weighted triples of
the form
\[
r
=
(q_{j_1}, p_{j_1}, w_{j_1})
+ \cdots +
(q_{j_{k}}, p_{j_{k}}, w_{j_{k}}),
\]
where $q_{j_\ell} \in \mathcal{Q}_i$, $p_{j_\ell} \in \mathcal{P}_i$, and
$w_{j_\ell}$ denotes the associated weight.

Let
\[
\begin{aligned}
r_a &=
(q_{a_1}, p_{a_1}, w_{a_1})
+ \cdots +
(q_{a_{k_1}}, p_{a_{k_1}}, w_{a_{k_1}}), \\
r_b &=
(q_{b_1}, p_{b_1}, w_{b_1})
+ \cdots +
(q_{b_{k_2}}, p_{b_{k_2}}, w_{b_{k_2}})
\end{aligned}
\]
be two such states.
The transition function $\delta_{\mathcal{R}}$ is defined by distributing over
the sums and combining all pairwise contributions:
\[
\delta_{\mathcal{R}}(r_a, r_b)
=
\sum_{u=1}^{k_1}
\sum_{v=1}^{k_2}
\bigl(
\delta_{\mathcal{Q}}(q_{a_u}, q_{b_v}),
\delta_{\mathcal{P}}(p_{a_u}, p_{b_v}),
w_{a_u} \cdot w_{b_v}
\bigr).
\]

As in the level-$1$ construction, the resulting collection of triples is
canonicalized by merging triples with identical state pairs $(q,p)$ and
accumulating their weights. Consequently, at each level, only states with
distinct sets of weighted triples are maintained.

\item \textbf{Level $l$.}
At the top-most level-$l$, each state of
the form
\[
r
=
(q_{j_1}, p_{j_1}, w_{j_1})
+ \cdots +
(q_{j_{k}}, p_{j_{k}}, w_{j_{k}}),
\]

is resolved and mapped to a fixed value as:
\[
\mathcal{V_R}(r) = \mathcal{V_R}((q_{j_1}, p_{j_1}, w_{j_1})
+ \cdots +
(q_{j_{k}}, p_{j_{k}}, w_{j_{k}})) = \sum_{u = 0}^k \mathcal{V_Q}(q_u) \times \mathcal{V_P}(p_u) \times w_u
\]

The unique states and their associated values are computed as shown at the top level. States with the same value are merged---with the left-most instance chosen as the representative---and this information is propagated downward using the
\TIDD reduction
algorithm discussed earlier.

\end{enumerate}

\subsection{Sampling}
\label{Se:ch6_sampling}
A \TIDD with only non-negative values in the range of $\mathcal{V}$ can be considered to represent a discrete distribution over the set of assignments to the Boolean variables.
An assignment---or equivalently, the corresponding binary tree—is considered
to be an elementary event. The probability of the assignment (or tree) 
$p$,
on which \TIDD runs, is the value of $p$ divided by the sum of the values of all assignments (trees) of \TIDD.

We begin by computing
$|L^{\downarrow_f}(q)|$,
the size of $L^{\downarrow_f}(q)$ for every state $q$,where $L^{\downarrow_f}(q)$ denotes the set of Boolean assignments that lead to state $q$.
We also call $|L^{\downarrow_f}(q)|$ the \emph{path count} for $q$.

\textit{Path counting.}
For each state $q$, the
path count
$|L^{\downarrow_f}(q)|$ can be computed recursively in a bottom-up fashion.
Let
\[
\{\delta(q_{a_1}, q_{b_1}) \rightarrow q,\;
  \delta(q_{a_2}, q_{b_2}) \rightarrow q,\;
  \ldots,\;
  \delta(q_{a_k}, q_{b_k}) \rightarrow q\}
\]
be the set of transitions whose target is $q$.
Then,
\[
|L^{\downarrow_f}(q)|
=
\sum_{u=1}^{k}
|L^{\downarrow_f}(q_{a_u})|
\cdot
|L^{\downarrow_f}(q_{b_u})|.
\]

At level~$0$, if the \TIDD contains two states, then the path count of each state
is $1$; otherwise, the unique state has path count $2$.

\textit{Sampling.}
Let $\mathcal{T}$ be a \TIDD
that represents
a (possibly weighted) distribution of the outputs of the Boolean function.
We define a
top-down
procedure to sample an assignment
$a \in \{0,1\}^n$ from the probability distribution induced by $\mathcal{T}$.

\medskip
\noindent
\textbf{Top-level sampling.}
At the top-most level~$l$, let $\mathcal{R}_l$ be the set of states.
Each state $q \in \mathcal{R}_l$ is assigned a weight
\[
W(q) \;\coloneqq\; \mathcal{V}(q) \cdot |L^{\downarrow_f}(q)|,
\]
where $\mathcal{V}(q)$ is the value associated with $q$ and
$|L^{\downarrow_f}(q)|$ denotes the number of assignments leading to $q$.
These weights induce a probability distribution
\[
\Pr[q]
\;=\;
\frac{W(q)}{\sum_{q' \in \mathcal{R}_l} W(q')}.
\]
A state $q$ is sampled according to this distribution.

\medskip
\noindent
\textbf{Recursive step.}
Given a sampled state $q$ at level~$i>0$, let
\[
\mathcal{T}(q)
=
\{|L^{\downarrow_f}(q)|\}
\]
denote the set of transitions whose target is $q$.
One such transition $\delta(q_j, q_k) \rightarrow q$ is selected based on the probability.
\[
  \Pr[\delta(q_j, q_k) \rightarrow q] = \frac{|L^{\downarrow_f}(q_j)| \cdot |L^{\downarrow_f}(q_k)|}{|L^{\downarrow_f}(q)|}.
\]

The sampling procedure is then invoked recursively on states $q_j$ and $q_k$,
yielding assignments $a_j$ and $a_k$, respectively.
The assignment returned for state $q$ is defined as the concatenation
\[
a \;=\; a_j \,\|\, a_k.
\]

\medskip
\noindent
\textbf{Base case (level~$0$).}
At level $0$, each state corresponds to an assignment of a single Boolean
variable.
The sampling behavior depends on the number of paths to a state at level $0$.
If a level-$0$ state $q$ has two transitions corresponding to
  assignments $0$ and $1$, then one of the two assignments is chosen uniformly
  at random. 
  If state $q$ has only one transition, then the assignment corresponding to the transition -- $0$ or $1$ -- is chosen.

In both cases, the base case returns a length-$1$ assignment, which is then
used in the recursive construction of higher-level assignments.

\medskip
\noindent
This recursive procedure samples assignments according to the probability
distribution represented by the \TIDD.

\section{Understanding \TIDDs, CFLOBDDs, and BDDs Through Examples}
\label{Se:ch6_examples}

In this section, we will discuss (1) functions that are efficiently represented by \TIDDs---i.e., have similar efficiency to CFLOBDDs---and can be exponentially smaller than any BDD for the function
(\sectref{ch6_efficient_relations}),
(2) give intuition why CFLOBDDs represent functions better than \TIDDs because of the lack of linear structure in \TIDDs (\sectref{ch6_separation_intuition}),
and (3) discuss an example where there is an exponential gap between CFLOBDDs and \TIDDs in terms of representation size
(\sectref{ch6_separation_example})---i.e., 
the CFLOBDD for a function $f$ can be exponentially smaller than any \TIDD for $f$. 

\subsection{Relations Efficiently Represented by \TIDDs}
\label{Se:ch6_efficient_relations}

In this section, we prove that there exists an exponential separation between \TIDDs and BDDs, 
using functions where the \TIDD is of a similar size to the functions' CFLOBDDs. 
We establish this result using two relations that can be efficiently represented by \TIDDs and CFLOBDDs, but not by BDDs. These examples are taken from~\cite[\S8.1,\S8.2]{2211.06818}, where CFLOBDDs show exponential compression over BDDs.

\subsubsection{The Hadamard Relation $H_n$}
\label{Se:ch6_Separation:HadamardRelation}

We will use the Hadamard Relation $H_n$ to show that \TIDDs efficiently represent (some) functions better than BDDs.

\begin{thm}[Exponential separation for the Hadamard relation]\label{The:ch6_HadamardSeparation}
The Hadamard Relation $H_n:\{0,1\}^{n/2} \times \{0,1\}^{n/2} \rightarrow \{1,-1\}$ between variable sets ($x_0\cdots x_{n/2}$) and ($y_0\cdots y_{n/2}$), where $n = 2^l$, can be represented by a \TIDD of size $\bigO(\log{n})$, similar to that of
the CFLOBDD for $H_n$.
In contrast, 
any
BDD that represents $H_n$ requires $\bigOmega(n)$ nodes.
\end{thm}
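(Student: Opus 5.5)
The statement has two halves, which I will prove separately.

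\emph{Upper bound for \TIDDs.} I will reuse the explicit construction of $M_{H_{2^i}}$ from \exref{ch6_hadamard}, with $i=l$ and $n=2^l$. The assignment-tree for $H_n$ under the interleaved order $\langle x_0,y_0,\ldots\rangle$ is a perfect binary tree of height $l$. Its level-$1$ subtrees are the pairs $(x_j,y_j)$, and each higher node combines two blocks of row/column bit pairs.

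The correctness invariant I will prove by induction on $j\ge 1$ is the following. The run on a subtree of height $j$ covering pairs $(x_{a},y_{a}),\ldots,(x_{b},y_{b})$ ends in $q_{j0}$ if $\sum_k x_k y_k$ over that block is even, and in $q_{j1}$ if it is odd.
\begin{itemize}
\item The base case $j=1$ is the table $\delta(q_1,q_1)=q_{11}$, with all other pairs going to $q_{10}$; this is exactly the parity of $x_ky_k$.
\item The inductive step holds because the level-$j$ transitions compute XOR of the two parities.
\end{itemize}
Since $H_n[x,y]=(-1)^{\sum_k x_ky_k}$ (which follows from $H_{2^{i+1}}=H_{2^i}\otimes H_{2^i}$), the value map $\mathcal V(q_{l0})=1$, $\mathcal V(q_{l1})=-1$ gives the correct entries.

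I then check that this automaton satisfies the constraints of \defref{ch6_TIDDDefn}:
\begin{itemize}
\item The ordering condition holds by construction.
\item $\mathcal V$ is a bijection onto $\{1,-1\}$.
\item The congruence condition 2\ref{It:Congruence} holds because $q_{j0}$ and $q_{j1}$ are separated by, e.g., $\delta(\cdot,q_{j0})$ for $j\ge1$, and $q_0,q_1$ are separated by $\delta(\cdot,q_1)$.
\end{itemize}
By \theoref{ch6_Canonicity}, this is therefore \emph{the} \TIDD for $H_n$. It has $2l+2$ states and $4l+2$ transitions, i.e.\ size $\bigO(\log n)$. For the comparison with CFLOBDDs I will simply invoke the $\bigO(\log n)$ CFLOBDD for $\HadamardFamily$ from~\cite{2211.06818} (\figref{ch6_CFLOBDDAsNWA}a), whose level-$j$ grouping likewise has two exits.

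\emph{Lower bound for BDDs.} This is the step that needs an actual argument, and it is where I expect the main obstacle. The bound must hold for \emph{any} BDD, so in particular for any variable order. The plan is to show that every one of the $n$ variables is essential, since a reduced BDD contains at least one node labeled by each variable on which the function depends. Fix any variable, say $x_k$. Setting $y_k=1$ and all other $y$'s to $0$ gives $H_n=(-1)^{x_k}$, so flipping $x_k$ changes the value; the case of $y_k$ is symmetric. Hence the BDD has at least $n$ internal nodes, which gives $\bigOmega(n)$.

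If a stronger, order-specific statement were intended, I would first try a fooling-set argument across the cut between the $x$'s and the $y$'s. For the interleaved order, however, only the linear bound is available, and this matches the claim $\bigOmega(n)$. Since $\log n$ versus $n$ is exponential, the claimed separation follows.
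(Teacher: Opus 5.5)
Your proposal is correct and follows the paper's two-part route. The upper bound is the explicit \TIDD of \exref{ch6_hadamard}, with $2l+2$ states and $4l+2$ edges; the BDD bound rests on each of the $n$ variables being essential, so it must label at least one node of any BDD, which the paper simply cites from~\cite[\S8.2]{2211.06818}. Your parity invariant, your check of the conditions of \defref{ch6_TIDDDefn}, and the witness $y_k=1$, $y_j=0$ for $j\neq k$ just supply details the paper leaves to the example and the citation; the appeal to \theoref{ch6_Canonicity} is harmless but unnecessary, since only the existence of a small \TIDD is claimed.
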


\begin{proof}
\smallskip
\noindent
\textit{\TIDD Claim}.
As shown in~\exref{ch6_hadamard}, each matrix $H_n \in \HadamardFamily$, where $n = 2^l$ can be represented by a \TIDD with $\bigO (l)$ vertices and edges---i.e.,
of size
$\bigO (\log n)$.

\smallskip
\noindent
\textit{BDD Claim.}
Regardless of the variable ordering, the BDD representation for $H_n$ requires at least $n$ nodes, one node for each variable in the argument,
as shown in~\cite[\S8.2]{2211.06818}.
\end{proof}

\subsubsection{The Equality Relation $\EQ_n$}
\label{Se:ch6_Separation:EqualityRelation}

We will use the Equality Relation $\EQ_n$ to show that \TIDDs efficiently represent (some) functions better than BDDs.

\begin{defn}\label{De:ch6_EqualityRelation}
The equality relation $\EQ_n : \{0,1\}^{n/2} \times \{0,1\}^{n/2} \rightarrow \{0,1\}$ on variables ($x_0\cdots x_{n/2-1}$) and ($y_0\cdots y_{n/2-1}$) is the relation
$
    \EQ_n(X,Y) \eqdef \Pi_{i=0}^{n/2-1} (x_i \Leftrightarrow y_i)
              =  \Pi_{i=0}^{n/2-1} (\bar{x_i}\bar{y_i} \lor x_i y_i)
$.
\end{defn}

\begin{thm}[Exponential separation for the equality relation]\label{The:ch6_EQSeparation}
For $n = 2^l$, where $l \geq 1$,
$\EQ_n$ can be represented by a \TIDD of size $\bigO(\log{n})$.
In contrast, a BDD that represents $\EQ_n$ requires $\bigOmega(n)$ nodes.
\end{thm}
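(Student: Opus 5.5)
We mirror the proof of \theoref{ch6_HadamardSeparation}: we construct an explicit \TIDD family for $\EQ_n$ under the interleaved ordering $\langle x_0, y_0, x_1, y_1, \ldots \rangle$, and we cite \cite[\S8.1]{2211.06818} for the BDD lower bound.

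\emph{\TIDD construction.} Level $0$ is the fork layer $\mathcal{Q}_0 = \{q_0, q_1\}$. Level $1$ has two states $q_{10}$ and $q_{11}$, meaning ``the pair $x_i y_i$ agrees'' and ``the pair disagrees''. Its transitions are
\[
\delta(q_0,q_0)=\delta(q_1,q_1)=q_{10}, \qquad \delta(q_0,q_1)=\delta(q_1,q_0)=q_{11}.
\]
For $j \geq 1$, level $j+1$ has two states $q_{(j+1)0}$ (all pairs below agree) and $q_{(j+1)1}$ (some pair disagrees). Its transitions are
\[
\delta(q_{j0},q_{j0})=q_{(j+1)0},
\]
and the other three pairs map to $q_{(j+1)1}$. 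Finally, $\mathcal{V}(q_{l0})=1$ and $\mathcal{V}(q_{l1})=0$.

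\emph{Correctness.} We prove by induction on $j \geq 1$ that the run on a subtree of height $j$ reaches $q_{j0}$ if and only if every adjacent pair $(x_i,y_i)$ in that block is equal. The interleaved ordering is what makes this work: every height-$1$ subtree spans exactly one pair $(x_i,y_i)$, so each block decomposes into whole pairs. The base case $j=1$ is the XOR table above. The inductive step holds because the conjunction over the left and right halves is exactly the AND encoded at level $j+1$.

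\emph{Well-formedness.} We check the conditions of \defref{ch6_TIDDDefn}.
\begin{itemize}
\item Totality holds because $\delta$ is defined on every pair of states.
\item $\mathcal{V}$ is a bijection onto $\{0,1\}$.
\item The ordering condition 2\ref{It:LevelIPlus1States} holds: the first pair $(q_{j0},q_{j0})$ yields $q_{(j+1)0}$ first. At level $1$, the first pair $(q_0,q_0)$ yields $q_{10}$ first.
\item The congruence condition 2\ref{It:Congruence} holds: $q_{j0}$ and $q_{j1}$ are separated by pairing with $q_{j0}$, since $\delta(q_{j0},q_{j0}) \neq \delta(q_{j1},q_{j0})$. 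At level $0$, pairing with $q_0$ separates $q_0$ from $q_1$.
\end{itemize}

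\emph{Size.} The \TIDD has $2(l+1)$ states and $4l+2$ transitions, which is $\bigO(\log n)$. By \theoref{ch6_Canonicity}, this is the \TIDD for $\EQ_n$.

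\emph{BDD claim.} Every one of the $n$ variables is essential to $\EQ_n$: flipping a single variable changes the value starting from an assignment with $X=Y$. A reduced BDD has at least one node per essential variable, so it has at least $n$ nodes under any ordering, i.e., $\bigOmega(n)$. Alternatively, we can cite \cite[\S8.1]{2211.06818} directly.

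The only delicate point is confirming that the level-$1$ layer must distinguish agreement from disagreement while the higher layers collapse to ``all equal so far'', without violating the naming and ordering conventions. Once the interleaved ordering is fixed, this reduces to routine checking.
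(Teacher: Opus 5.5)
Your proposal is correct and follows the paper's proof. It uses the same two-states-per-level \TIDD under the interleaved ordering (agree/disagree table at level $1$, AND at higher levels, $2l+2$ states and $4l+2$ edges), and the same one-node-per-essential-variable BDD bound that the paper delegates to \cite[\S8.1]{2211.06818}. Your correctness induction and well-formedness checks are extra detail the paper omits, not a different route.
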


\begin{proof}
\noindent
\textit{\TIDD Claim.}
We claim that with the interleaved-variable ordering $\langle x_0, y_0, \ldots, x_{n/2-1}, y_{n/2-1} \rangle$, the \TIDD representation of $\EQ_n$ uses $\bigO (\log{n})$ states and edges.

With the interleaved variable ordering, the \TIDD representation of $\EQ_{2^l}$, where $n = 2^l$ is a tuple $M_{\EQ_{2^l}} = (l = l, \mathcal{Q} = {\mathcal{Q}_0} \cup {\mathcal{Q}_1} \cup \ldots \cup {\mathcal{Q}_l}, \mathcal{Q}_f = {\mathcal{Q}_l}, \mathcal{F} = \{0,1,\Omega\}, \delta, \mathcal{V})$, with
\begin{itemize}
    \item $\mathcal{Q}_0 = \{q_0, q_1\}$, $\mathcal{Q}_j = \{q_{j0}, q_{j1}\}$, for $j = 1\ldots l$.
    \item $\delta$ function, for $j = 1\ldots l-1$:
    \[
    \begin{array}{cccc}
         \multicolumn{2}{c}{\delta(\mathbf{0}) \rightarrow q_0} & \multicolumn{2}{c}{\delta(\mathbf{1}) \rightarrow q_1}\\
         \delta(q_{0}, q_{0}) = q_{10} & \delta(q_{0}, q_{1}) = q_{11} & \delta(q_{1}, q_{0}) = q_{11} & \delta(q_{1}, q_{1}) = q_{10}\\
         \delta(q_{10}, q_{10}) = q_{20} & \delta(q_{10}, q_{11}) = q_{21} & \delta(q_{11}, q_{10}) = q_{21} & \delta(q_{11}, q_{11}) = q_{21}\\
         \vdots\\
         \delta(q_{j0}, q_{j0}) = q_{(j+1)0} & \delta(q_{j0}, q_{j1}) = q_{(j+1)1} & \delta(q_{j1}, q_{j0}) = q_{(j+1)1} & \delta(q_{j1}, q_{j1}) = q_{(j+1)1}\\
    \end{array}
    \]
    \item $\mathcal{V}(q_{l0}) = 1$, $\mathcal{V}(q_{l1}) = 0$
\end{itemize}
We observe that $M_{\EQ_{2^i}}$ represents $\EQ_{2^i}$ and has $2i + 2$ states and $4i + 2$ edges.
Therefore, the \TIDD representation of $\EQ_{n}$ is of size $\bigO(\log{n})$.

\smallskip
\noindent
\textit{BDD Claim.}
Regardless of the variable ordering, the BDD representation for $\EQ_n$ requires at least $n$ nodes, one node for each variable in the argument,
as shown in~\cite[\S8.1]{2211.06818}.
\end{proof}

\subsection{Examples Showing the Use of Linearity in CFLOBDDs}
\label{Se:ch6_separation_intuition}

To understand how linearity in CFLOBDDs helps it to represent functions more efficiently than \TIDDs, let us take a look at the example shown in~\figref{ch6_intuition_example}.

\begin{figure}[tb!]
    \centering
    \begin{subfigure}[t]{0.99\linewidth}
        \includegraphics[width=0.9\linewidth]{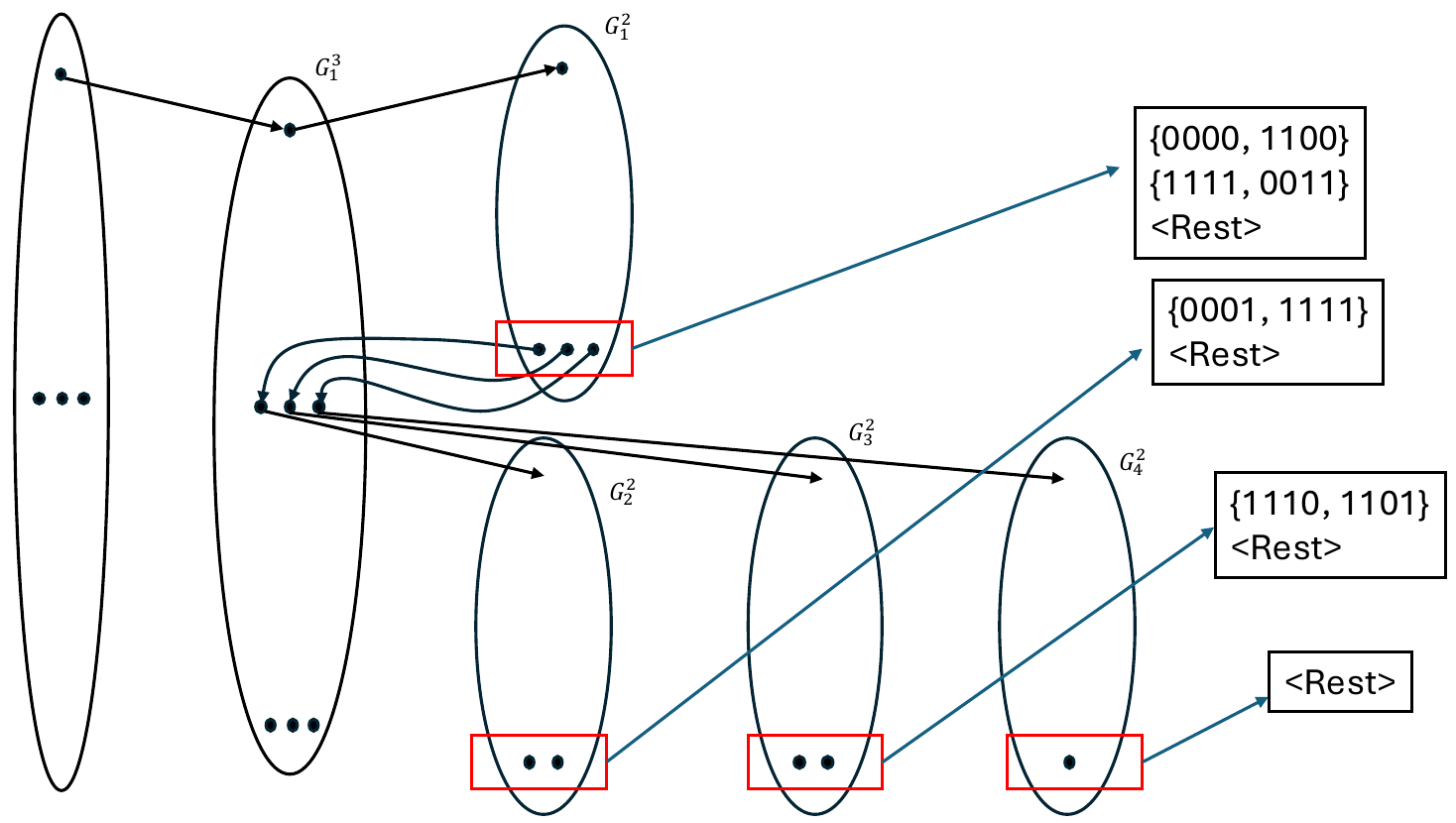}
        \caption{Diagram showing a proto-CFLOBDD with AConnection to a proto-CFLOBDD whose AConnection and BConnections at level-$2$ partition $P(4)$ as shown.}
        \label{Fi:ch6_intuition_example_a}
    \end{subfigure}
    \vspace{1ex}
    \begin{subfigure}[t]{0.99\linewidth}
        \includegraphics[width=0.9\linewidth]{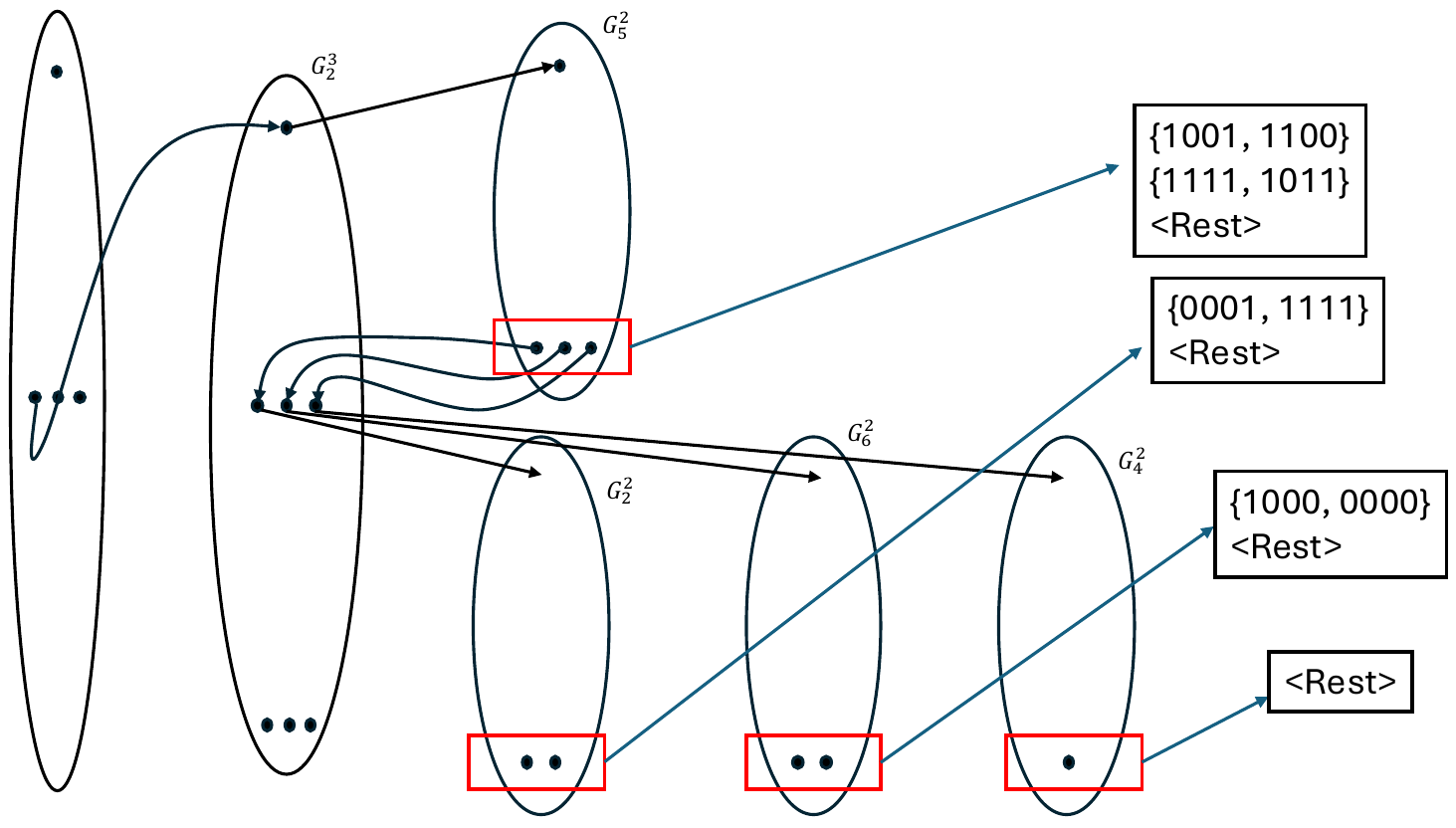}
        \caption{Diagram showing a proto-CFLOBDD with a BConnection to a proto-CFLOBDD whose AConnection and BConnections at level-$2$ partition $P(4)$ as shown.}
        \label{Fi:ch6_intuition_example_b}
    \end{subfigure}
    \caption{
    Diagrams showing a proto-CFLOBDD with an AConnection
    (\figref{ch6_intuition_example_a})
    and a BConnection
    (\figref{ch6_intuition_example_b})
    to proto-CFLOBDDs that partition the space of strings differently; edges not necessary for the discussion are omitted to remove clutter.}
    \label{Fi:ch6_intuition_example}
\end{figure}

\figref{ch6_intuition_example} shows a proto-CFLOBDD that has another proto-CFLOBDD $G_1^3$ at level-$3$ as AConnection and $G_2^3$ at level-$3$ as the first BConnection. Edges not necessary for the discussion are omitted to remove clutter.
Let us understand $G_1^3$ in detail.
$G_1^3$ has an AConnection $G_1^2$ and three BConnections: $G_2^2$, 
$G_3^2$,
and $G_4^2$.
Each $G_x^2$ is a proto-CFLOBDD at level-$2$ over $4$ variables that partition $P(4)$ into distinct partitions:
\[
\begin{array}{rl}
    \llbracket P_{G_1^2} \rrbracket &= [\{0000, 1100\}, \{1111,0011\}, {\tt \langle Rest \rangle}]\\
    \llbracket P_{G_2^2} \rrbracket &= [\{0001, 1111\}, {\tt \langle Rest \rangle}]\\
    \llbracket P_{G_3^2} \rrbracket &= [\{1110, 1101\}, {\tt \langle Rest \rangle}]\\
    \llbracket P_{G_4^2} \rrbracket &= [{\tt \langle Rest \rangle}] = P(4)\\
\end{array}
\]
where ${\tt \langle Rest \rangle}$ means $P(4) \,\backslash\, {\text{all explicitly listed sets}}$.
In the case of $\llbracket P_{G_4^2} \rrbracket$, ${\tt \langle Rest \rangle} = P(4)$.
We observe that there are four unique groupings at level-$2$ and 12 edges between groupings at level-$2$ and level-$3$,
including 5 return edges omitted from the diagram of $G_3^1$.

As discussed in~\sectref{ch6_CFLOBDDRelation}, the states of \TIDD at level-$2$ would depend on the partitions of all the groupings of the CFLOBDD at level-$2$.
For now, let us consider the partitions seen above and deduce the states and $L^{\downarrow_f}$ of the states just based on the partitions seen so far.
Based on the discussion in~\sectref{ch6_CFLOBDDRelation}, the states of the \TIDD at level-$2$ should satisfy the property: If two words belong to the same partition in all the partition spaces of CFLOBDD groupings, they belong to the same state's down-assignment-language in \TIDD.
Similarly, if there exists at least one partition space where two words do not belong to the same partition, then they must belong to the down-assignment-language of different states.
Using this insight, let us list the states at level-$2$ of the \TIDD representing this function and their corresponding $L^{\downarrow_f}$.


\[
\begin{array}{rcl}
     q_1 &:& L^{\downarrow_f}(q_1) = \{0000, 1100\}\\
     q_2 &:& L^{\downarrow_f}(q_2) = \{1111\}\\
     q_3 &:& L^{\downarrow_f}(q_3) = \{0011\}\\
     q_4 &:& L^{\downarrow_f}(q_4) = \{0001\}\\
     q_5 &:& L^{\downarrow_f}(q_5) = \{1110, 1101\}\\
     q_6 &:& L^{\downarrow_f}(q_6) = {\tt \langle Rest \rangle}\\
\end{array}
\]

The above partition satisfies the property discussed.
This leads to a \TIDD having 6 states and 25 edges
at level 2,
which is more than the size in the case of a CFLOBDD representation.
This is because, in CFLOBDDs, the BConnections are constructed based on the exit vertices of the AConnection at a given level, thereby every grouping encodes a local (sub-)function and partitions the space of $P(2^{2^i})$ individually.
As a result, the number of groupings and, particularly,
the number of
edges to represent this information is vastly reduced.

This observation,
that the groupings at a given level $l$ can have different partitions,
provides intuition for why the linear structure inherent in CFLOBDDs (and not inherited by \TIDDs) enables a large function to be decomposed into smaller subfunctions, each of which can be represented efficiently. We now extend this example to illustrate this intuition further.

Let us now consider the proto-CFLOBDD shown in~\figref{ch6_intuition_example_b}.
The same proto-CFLOBDD at level-$4$ has a BConnection to a proto-CFLOBDD $G_2^3$ at level-$3$.
$G_2^3$ has an AConnection to $G_5^2$ and three BConnections: $G_2^2$ (reuse from~\figref{ch6_intuition_example_a}), $G_6^2$, and $G_4^2$ (reuse from~\figref{ch6_intuition_example_a}).
Each of these groupings forms the following partition spaces:

\[
\begin{array}{rl}
    \llbracket P_{G_5^2} \rrbracket &= [\{1001, 1100\}, \{1111, 1011\}, {\tt \langle Rest \rangle}]\\
    \llbracket P_{G_2^2} \rrbracket &= [\{0001, 1111\}, {\tt \langle Rest \rangle}]\\
    \llbracket P_{G_6^2} \rrbracket &= [\{1000, 0000\}, {\tt \langle Rest \rangle}]\\
    \llbracket P_{G_4^2} \rrbracket &= [{\tt \langle Rest \rangle}] = P(4)\\
\end{array}
\]

There are four unique groupings at level-$2$ and 12 edges between groupings at level-$2$ and level-$3$, considering only the ones in~\figref{ch6_intuition_example_b}.
The total number of unique groupings and edges at level-$2$ among AConnection and first BConnection of the top-level CFLOBDD (i.e., considering~\figrefs{ch6_intuition_example_a}{ch6_intuition_example_b}) is 6 groupings at level-$2$ and 24 edges.

The new states of \TIDD at level-$2$ would be:

\[
\begin{array}{rcl}
     q'_1 &:& L^{\downarrow_f}(q'_1) = \{0000\}\\
     q'_2 &:& L^{\downarrow_f}(q'_2) = \{1111\}\\
     q'_3 &:& L^{\downarrow_f}(q'_3) = \{0011\}\\
     q'_4 &:& L^{\downarrow_f}(q'_4) = \{0001\}\\
     q'_5 &:& L^{\downarrow_f}(q'_5) = \{1110, 1101\}\\
     q'_6 &:& L^{\downarrow_f}(q'_6) = {\tt \langle Rest \rangle}\\
     q'_7 &:& L^{\downarrow_f}(q'_7) = \{1100\}\\
     q'_8 &:& L^{\downarrow_f}(q'_8) = \{1001\}\\
     q'_9 &:& L^{\downarrow_f}(q'_9) = \{1011\}\\
     q'_{10} &:& L^{\downarrow_f}(q'_1) = \{1000\}\\
\end{array}
\]

We can observe that with the addition of new partition spaces,
the states of \TIDD increase from 6 to 10.
Particularly, the state $q_1$ bifurcates into $q'_1, {q'_7}$ and $q_6$ bifurcates into $q'_6, q'_8, q'_9, q'_{10}$.
Thereby, the number of states at level-$2$ of the \TIDD is 10, and the number of edges is 81.
This is significantly higher than the
number of groupings and edges in the CFLOBDD.
We also observe that the rate of growth of the states of a \TIDD, and especially the edges, grows quite fast.

This
example
illustrates that the linear structure of CFLOBDDs enables a large function to be decomposed into smaller subfunctions that each admit efficient representations, thereby yielding an efficient representation overall—a decomposition that is not possible in a \TIDD.

\subsection{Exponential Separation between CFLOBDDs and \TIDDs}
\label{Se:ch6_separation_example}

In this section, we present an example showing that CFLOBDDs can be exponentially more succinct than \TIDDs under the same variable ordering. We emphasize that this separation is relatively weak, 
because
it does not rule out the existence of a more favorable variable ordering for \TIDDs. 
However,
the result serves to highlight how the linear structure of CFLOBDDs enables exponentially more succinct representations than \TIDDs.

The function that we use can be thought of as taking a Boolean matrix as input and returning 1 if and only if all elements on the anti-diagonal are 0.
That is, the function returns 1 for a matrix of the form
\[
\begin{bmatrix}
\cdot & \cdot & \cdot & \cdot & 0 \\
\cdot & \cdot & \cdot & 0 & \cdot \\
\cdot & \cdot & 0 & \cdot & \cdot \\
\cdot & 0 & \cdot & \cdot & \cdot \\
0 & \cdot & \cdot & \cdot & \cdot
\end{bmatrix}
\]
As defined in \defref{ch6_ExpoSeparationExample}, the matrix entries are presented in row-major order.

\begin{defn}\label{De:ch6_ExpoSeparationExample}
Function $h_n : \{0,1\}^{n^2} \rightarrow \{0,1\}$ on variables $X = (X_0,\ldots, X_{n-1}) = (x_0\ldots x_{n^2-1})$
-- i.e., where $X_i = (x_{i*n}\ldots x_{i*n+n-1})$ 
is defined as
$
    h_n(X) \eqdef \bigwedge_{i=0}^{n-1} f_i(X_i)
$, with
\[
    f_i(X_i) = 
    \begin{cases}
        1 & \text{$i^{th}$ bit of $X_i$: $x_{i*n + n-1-i} = 0$}\\ 
        0 & \text{otherwise}
    \end{cases}
\]
\end{defn}

Function $f_i$ tests the $i^{\textit{th}}$ row for the pattern ``Is the  $i^{\textit{th}}$ entry of the row equal to 0?''

\begin{thm}[Exponential separation]\label{The:ch6_ExpSeparation}
For $n = 2^l$, where $l \geq 1$,
$h_n$ can be represented by a CFLOBDD
of size
$\bigO(n)$.
In contrast, 
the size of a \TIDD representation of $h_n$ (with the same variable ordering) is $\bigOmega(2^n)$.
\end{thm}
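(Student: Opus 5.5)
The proof has two independent halves: an explicit CFLOBDD construction of size $\bigO(n)$, and a counting (Myhill--Nerode style) lower bound on the number of level-$l$ states of any \TIDD for $h_n$. Throughout, $h_n$ has $n^2 = 2^{2l}$ variables, so both structures have $2l$ levels. Under the row-major ordering, each row $X_i$ occupies exactly the $i$-th aligned block of $n = 2^l$ consecutive variables. So every level-$l$ grouping/state-layer sees precisely one row.

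\textbf{\TIDD lower bound.} I would show that all $2^n$ words of $P(l)$ lie in pairwise distinct down-assignment-languages, which gives $|\mathcal{Q}_l| \geq 2^n$.
\begin{enumerate}
\item Take $w \neq w'$ in $P(l)$. Then there is a position $j$ with $w_j = 0$ and $w'_j = 1$ (swapping the names of $w$ and $w'$ if necessary).
\item Let $i = n-1-j$, and consider the context that places a word in block $i$ and fills every other block with $0^n$. Every other row then satisfies its $f_{i'}$.
\item Hence $h_n$ evaluates to $1$ on the tree with $w$ plugged into block $i$, and to $0$ with $w'$ plugged in.
\end{enumerate}
Suppose $w$ and $w'$ reached the same level-$l$ state $q$. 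A \TIDD is a deterministic bottom-up automaton, and its transitions do not depend on the position of a subtree. So both trees would agree at every node outside block $i$ and would reach the same top state, hence the same value, which is a contradiction. Therefore $|\mathcal{Q}_l| \geq 2^n$, so the \TIDD has size $\bigOmega(2^n)$. This matches the intuition of \sectref{ch6_CFLOBDDRelation}: the level-$l$ partition of the \TIDD must refine the partitions of all $n$ row-groupings simultaneously, and together these separate every pair of words.

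\textbf{CFLOBDD upper bound.}
\begin{enumerate}
\item \emph{Levels $0$ through $l$.} For block $i$, the level-$l$ proto-CFLOBDD is the projection proto-CFLOBDD on the local bit $n-1-i$, with two exits ordered so that $0$ comes before $1$. A projection proto-CFLOBDD for index $j$ at level $k$ depends only on $j \bmod 2^k$, with don't-care/no-distinction groupings shared. So the number of distinct groupings at level $k \le l$ is $\bigO(2^k)$, and summing over $k \le l$ gives $\bigO(2^{l}) = \bigO(n)$ groupings in total, each of constant size.
\item \emph{Levels $l+1$ through $2l$.} Each grouping realizes the conjunction of the rows it covers. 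Its A-connection is the left half. Its first B-connection, reached from the ``all zero so far'' exit, is the right half. Its second B-connection, reached from the ``failed'' exit, is the no-distinction proto-CFLOBDD. Every grouping has two exits.
\item \emph{Size count.} At level $l+k$ there are $n/2^k$ distinct groupings, one per aligned range of rows. This sums to $\bigO(n)$.
\end{enumerate}
This is exactly where the linear structure is exploited: which B-connection is taken depends on the exit of the A-connection, and different level-$l$ groupings partition $P(l)$ differently, each into only two classes.

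\textbf{Main obstacle.} The delicate step is the CFLOBDD side. I need to verify that the construction satisfies the CFLOBDD structural invariants (ordered exits, return-edge canonicity/reduction), and that the sharing count for the projection groupings really totals $\bigO(n)$ rather than $\bigO(n\log n)$. My first attempt would be to bound, per level, the number of distinct groupings by the number of distinct values of $(n-1-i) \bmod 2^k$, and to cite the projection construction of~\cite{2211.06818}. The \TIDD lower bound is routine once the context argument is written out.
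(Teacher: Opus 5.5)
Your proposal is correct. The CFLOBDD half is essentially the paper's construction. Your level-$l$ projection groupings on local bit $n-1-i$ are its $F(n,i)$. Your index $j \bmod 2^k$ gives its count of $2^k$ distinct groupings per level $k\le l$. Your conjunction groupings, with return tuples $[1,2]$ and $[2]$, are its levels $l+1,\ldots,2l$. The per-level bound you sketch under ``main obstacle'' already settles the $\mathcal{O}(n)$ versus $\mathcal{O}(n\log n)$ worry, since $\sum_{k\le l}(2^k+1)=\mathcal{O}(n)$.

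Where you genuinely differ is the TIDD lower bound. The paper appeals to a principle stated, but not proved, in its discussion relating $L^{\downarrow_f}$ to proto-CFLOBDDs: the level-$l$ states of a TIDD form the coarsest common refinement of the partitions of $P(l)$ induced by all level-$l$ CFLOBDD groupings. It then refines one row-function at a time: $f_0$ gives $2$ classes, $f_0,f_1$ give $4$, and $f_0,\ldots,f_s$ give $2^{s+1}$.

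You instead give a direct Myhill--Nerode (fooling-set) argument. Two words differing at local bit $j$ are separated by the context that puts the word in row $n-1-j$ and $0^n$ everywhere else. Determinism and the position-independence of $\delta$ then force distinct level-$l$ states.

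Your route is self-contained and uses only the definition of a TIDD, not canonicity or any particular CFLOBDD. It supplies exactly the justification the paper's refinement step leaves implicit, namely why each $f_i$-partition must be respected. It also gives the exact count $|\mathcal{Q}_l|=2^n$; the paper's final figure of $2^{n+1}$ is an off-by-one slip, since $P(l)$ has only $2^n$ words. The paper's route, in exchange, phrases the blow-up in terms of the $n$ level-$l$ CFLOBDD groupings that partition $P(l)$ independently, which is the point about linear structure that section is making.
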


\begin{proof}
\smallskip
\noindent
\textit{CFLOBDD Claim.}
We will use the variable ordering $(x_0\ldots x_{n^2-1})$.
A CFLOBDD representing $h_n$ has $n^2$ variables and $2\log(n) + 1$ levels: $0,\ldots,2\log(n)$.
Every proto-CFLOBDD at level-$\log(n)$ encodes a function over $n$ variables.

The CFLOBDD of interest has groupings of two kinds, stratified by the level at which they appear:
(a) one kind makes up levels $0$ through $\log(n)$;
(b) the other kind makes up levels $\log(n) + 1$ through $2\log(n)$.
The groupings in (a) implement the functions $f_i$, $0 \leq i \leq n-1$;
the groupings in (b) implement the conjunction $\bigwedge_{i=0}^{n-1}$ in the definition of $h_n$.
We start by describing the structure of the groupings in (a).

Let us create a new series of functions $F(2^k,s) = g_{s}(\text{over $2^k$ variables})$, where $k = 1\ldots l$, $2^l = n$, and $s = 0\ldots 2^{k}-1$, and $g_s (\text{over $2^k$ variables}) = 1$, if the $s^{\textit{th}}$ bit of the $2^k$ variables is $0$, otherwise $1$.
We can observe that, for $2^k = n$, $g_s (\text{over $n$ variables}) = f_s(X_s)$.
All $2^{2^k}$ assignments of length $2^k$
can be split into two buckets: $F(2^k,s) = 1$ and $F(2^k,s) = 0$.
The functions in this series over $2^k$ variables would be: $F(2^k, s=0)$, $F(2^k, s=1)$, $\ldots$, $F(2^k, s=2^k-1)$.

\begin{figure}[tb!]
    \centering
    \begin{subfigure}[t]{0.4\linewidth}
        \includegraphics[width=0.5\linewidth]{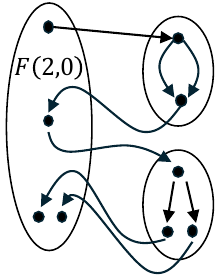}
        \caption{\protect \raggedright proto-CFLOBDD for $F(2,0)$}
        \label{Fi:ch6_f_2_0}
    \end{subfigure}
    \hspace{2ex}
    \begin{subfigure}[t]{0.4\linewidth}
        \includegraphics[width=0.5\linewidth]{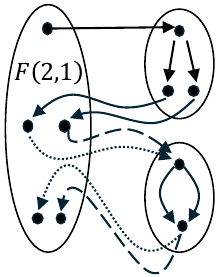}
        \caption{\protect \raggedright proto-CFLOBDD for $F(2,1)$}
        \label{Fi:ch6_f_2_1}
    \end{subfigure}
    \vspace{0.5ex}
    \begin{subfigure}[t]{0.4\linewidth}
        \includegraphics[width=0.6\linewidth]{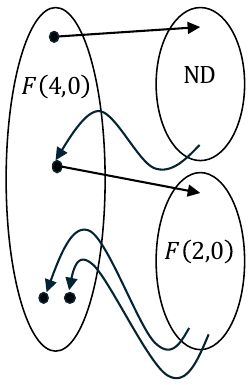}
        \caption{\protect \raggedright proto-CFLOBDD for $F(4,0)$, recursively calls $F(2,0)$}
        \label{Fi:ch6_f_4_0}
    \end{subfigure}
    \hspace{2ex}
    \begin{subfigure}[t]{0.4\linewidth}
        \includegraphics[width=0.6\linewidth]{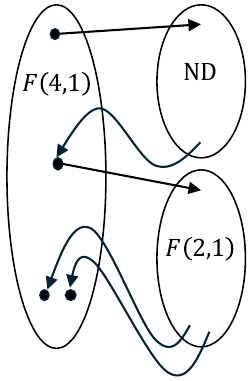}
        \caption{\protect \raggedright proto-CFLOBDD for $F(4,1)$, recursively calls $F(2,1)$}
        \label{Fi:ch6_f_4_1}
    \end{subfigure}
    \vspace{0.5ex}
    \begin{subfigure}[t]{0.4\linewidth}
        \includegraphics[width=0.6\linewidth]{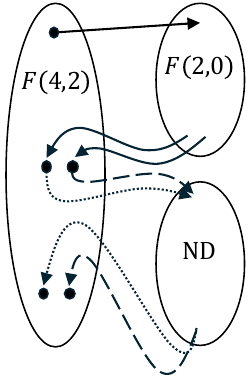}
        \caption{\protect \raggedright proto-CFLOBDD for $F(4,2)$, recursively calls $F(2,0)$}
        \label{Fi:ch6_f_4_2}
    \end{subfigure}
    \hspace{2ex}
    \begin{subfigure}[t]{0.4\linewidth}
        \includegraphics[width=0.6\linewidth]{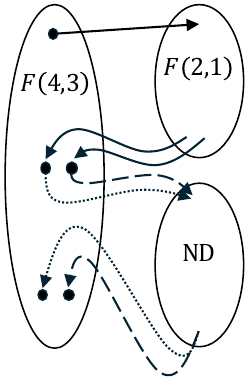}
        \caption{\protect \raggedright proto-CFLOBDD for $F(4,3)$, recursively calls $F(2,1)$}
        \label{Fi:ch6_f_4_3}
    \end{subfigure}
    \caption{\protect \raggedright  The diagrams show proto-CFLOBDDs for $F(2,.)$, $F(4,.)$. $\tt ND$ represents $\tt NoDistinctionNode$.}
    \label{Fi:h_figs}
\end{figure}

\begin{figure}[tb!]
    \centering
    \begin{subfigure}[t]{0.45\linewidth}
        \includegraphics[width=0.6\linewidth]{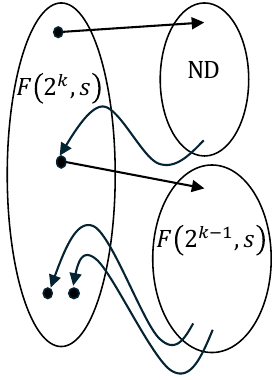}
        \caption{\protect \raggedright proto-CFLOBDD for $F(2^k,s)$, recursively calls $F(2^{k-1},s)$, when $s < 2^{k-1}$}
        \label{Fi:ch6_f_k_s}
    \end{subfigure}
    \begin{subfigure}[t]{0.45\linewidth}
        \includegraphics[width=0.65\linewidth]{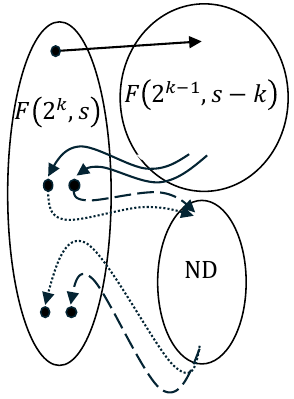}
        \caption{\protect \raggedright proto-CFLOBDD for $F(2^k,s)$, recursively calls $F(2^{k-1},s-k)$, when $s \geq 2^{k-1}$}
        \label{Fi:ch6_f_k_s_2}
    \end{subfigure}
    \caption{\protect \raggedright Diagrams show proto-CFLOBDDs for $F(k,s)$ when $s < 2^{k-1}$ and $s \geq 2^{k-1}$. $\tt ND$ represents $\tt NoDistinctionNode$.}
    \label{Fi:h_figs_general}
\end{figure}

Let us look at some functions in this series:
\begin{enumerate}
    \item $F(2,0)$ splits 
    the assignments of length 2
    into $[\{00,10\}, \{01,11\}]$. 
    The first set consists of the 2-bit strings with a 0 in the $0^{\textit{th}}$ position.
    The proto-CFLOBDD is shown in~\figref{ch6_f_2_0}.
    \item $F(2,1)$ splits 
    the assignments of length 2
    into $[\{00,01\}, \{10,11\}]$.
    The first set consists of the 2-bit strings with a 0 in the $1^{\textit{st}}$ position.
    The proto-CFLOBDD is shown in~\figref{ch6_f_2_1}.
\end{enumerate}

The next functions in this series would be:
\begin{enumerate}
    \item $F(4,0)$ splits
    the assignments of length 4
    into 
    \[
    [\{0000,0010,0100,0110,\ldots,1110\}, \{0001,0011,0101,0111,\ldots,1111\}]
    \]
    The first set consists of the 4-bit strings with a 0 in the $0^{\textit{th}}$ position.
    The proto-CFLOBDD, shown in~\figref{ch6_f_4_0}, recursively calls a $\tt NoDistinctionNode$ for the AConnection and the proto-CFLOBDD for $F(2,1)$ for the BConnection.
    \item $F(4,1)$ splits 
    the assignments of length 4
    into 
    \[
    \left[
    \begin{array}{l}
    \{0000,0001,0100,0101,\ldots,1100,1101\}, \\
    \{0010,0011,0110,0111,\ldots,1110,1111\}
    \end{array}
    \right]
    \]

    The first set consists of the 4-bit strings with a 0 in the $1^{\textit{st}}$ position.
    Similarly, the proto-CFLOBDD, shown in~\figref{ch6_f_4_1}, recursively calls a $\tt NoDistinctionNode$ for the AConnection and the proto-CFLOBDD for $F(2,2)$ for the BConnection.
    \item $F(4,2)$ splits 
    the assignments of length 4
    into 
    \[
    \left[
    \begin{array}{l}
    \{0000,0001,0010,0011,1000,1001,1010,1011\}, \\
    \{0100,0101,0110,0111,1100,1101,1110,1111\}
    \end{array}
    \right]
    \]
    The first set consists of the 4-bit strings with a 0 in the $2^{\textit{nd}}$ position.
    The proto-CFLOBDD, shown in~\figref{ch6_f_4_2}, recursively calls a proto-CFLOBDD for $F(2,1)$ for the AConnection and $\tt NoDistinctionNodes$ for the BConnections.
    \item $F(4,3)$ splits 
    the assignments of length 4
    into 
    \[
    \left[
    \begin{array}{l}
    \{0000,0001,0010,0011,0100,0101,0110,0111\},\\
    \{1000,1001,1010,1011,1100,1101,1110,1111\}
    \end{array}
    \right]
    \]
    The first set consists of the 4-bit strings with a 0 in the $3^{\textit{rd}}$ position.
    The proto-CFLOBDD, shown in~\figref{ch6_f_4_3}, recursively calls a proto-CFLOBDD for $F(2,2)$ for the AConnection and $\tt NoDistinctionNodes$ for the BConnections.
\end{enumerate}

If $s < 2^{k-1}$, then a proto-CFLOBDD for $F(2^k,s)$ has a $\tt NoDistinctionNode$ for AConnection and the proto-CFLOBDD for $F(2^{k-1},s)$ as the BConnection. Similarly, if $s \geq 2^{k-1}$, then the proto-CFLOBDD for $F(2^k,s)$ has the proto-CFLOBDD for $F(2^{k-1},s - k)$ as the AConnection and {\tt NodeDistinctionNodes} as BConnections.

Using this information, we can make the following 
observations about the sizes of these proto-CFLOBDDs.
Let $\mathbb{S}({F(2^k,s)})$ and $\mathbb{E}({F(2^k,s)})$ represent the number of unique groupings and edges of the proto-CFLOBDD representing $F(2^k,s)$,
respectively,
of the proto-CFLOBDD used to represent $F(2^k,s)$.
\[
\begin{array}{lcr|lcr}
     \mathbb{S}({F(2,0)}) & = & 3 & \mathbb{E}({F(2,0)}) & = & 5\\
     \mathbb{S}({F(2,1)}) & = & 3 & \mathbb{E}({F(2,1)}) & = & 7\\
     \mathbb{S}({F(4,0)}) & = & 5 & \mathbb{E}({F(4,0)}) & = & 14\\
     \mathbb{S}({F(4,1)}) & = & 5 & \mathbb{E}({F(4,1)}) & = & 16\\
     \mathbb{S}({F(4,2)}) & = & 5 & \mathbb{E}({F(4,2)}) & = & 16\\
     \mathbb{S}({F(4,3)}) & = & 5 & \mathbb{E}({F(4,3)}) & = & 18\\
\end{array}
\]

For $F(4,0)$, the number of groupings is equal to the number of groupings of $F(2,0)$ + 1 (for $\tt NoDistinctionNode$) + 1 (for grouping at level-$3$) = 5, and the number of edges is the number of edges of $F(2,0)$ + 4 (for $\tt NoDistinctionNode$) + 4 (for edges between level-$2$ and level-$3$ groupings) = 14. Similarly, we can count the number of groupings for others as well.

Extending this
argument to general values of $k$ and $s$,
the number of groupings of $F(k,s)$ is $\mathbb{S}({F(2^k,s)}) = \mathbb{S}({F(2^{k-1},s)}) + 2$, and the number of edges is $\mathbb{E}({F(2^k,s)}) = \mathbb{E}({F(2^{k-1},s)}) + 9$, if $s < 2^{k-1}$ and $\mathbb{S}({F(2^k,s)}) = \mathbb{S}({F(2^{k-1},s-k)}) + 2$ $\mathbb{E}({F(2^k,s)}) = \mathbb{E}({F(2^{k-1},s-k)}) + 11$, otherwise (see~\figrefs{ch6_f_k_s}{ch6_f_k_s_2}).
That is, with each additional increase in $k$, the number of groupings increases by 2, and the number of edges increases by 9-11, i.e., a constant factor.

Now, let us look at the CFLOBDD 
that represents
$h_n$. Each of the groupings at level-$\log (n)$ encodes $f_i$ where $i = 0, \ldots, n-1$.
The proto-CFLOBDD representing $f_i(X_i)$ is encoding the function $F(n,i)$. So, there are $n$ proto-CFLOBDDs corresponding to
the $n$ functions
$f_i(X_i) = F(n,i)$.
Each proto-CFLOBDD recursively calls a proto-CFLOBDD encoding function $F(n/2,i')$, where $i'$ depends on the relationship with $n$.
Because there are $n$ $F(n,i)$ functions, consequently, there would be $n/2$ number of $F(n/2,i')$ functions altogether.
So at every level-$p$, $p = 1, \ldots, \log(n)$, there would be $2^p$ proto-CFLOBDDs representing $F(2^p, j)$, where $j = 0, \ldots, 2^p-1$.
The number of groupings of all proto-CFLOBDDs encoding functions in the series $f_i(X_i) = F(n,i)$,
for $i = 0, \ldots, n-1$,
is $n + n/2 + n/4 + \ldots + 1 = 2n$.
The number of edges is constant per grouping and hence linear in the number of groupings.
The total size of all proto-CFLOBDDs encoding functions in the series $f_i(X_i) = F(n,i)$ is $\bigO(n)$.

\begin{figure}[tb!]
    \centering
    \begin{subfigure}[t]{0.45\linewidth}
        \includegraphics[width=\linewidth]{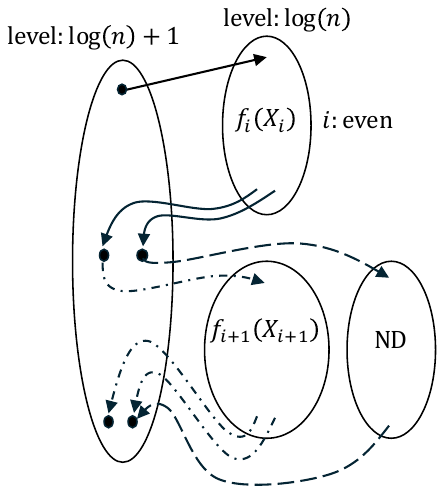}
        \caption{\protect \raggedright The diagram shows the proto-CFLOBDD structure at level-$\log(n) + 1$. The groupings at level-$\log(n) + 1$ have an AConnection to $f_i(X_i)$ (where $i$ is even), and two BConnections: $f_{i+1}(X_{i+1})$ with return tuple $[1,2]$ and $\tt NoDistinctionNode$ with return tuple $[2]$.}
        \label{Fi:ch6_h_logn}
    \end{subfigure}
    \hspace{2ex}
    \begin{subfigure}[t]{0.43\linewidth}
        \includegraphics[width=\linewidth]{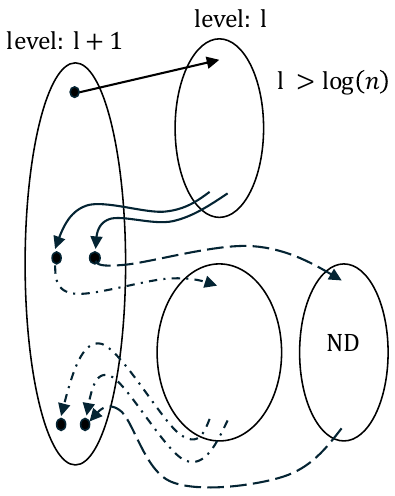}
        \caption{\protect \raggedright The diagram shows the proto-CFLOBDD structure at level-$l+1$, $l > \log(n)$. The groupings at level-$l + 1$ have an AConnection and two BConnections as shown.}
        \label{Fi:ch6_h_l}
    \end{subfigure}
    \caption{Diagrams showing proto-CFLOBDDs of $h_n$ at level-$\log(n)+1$ (\figref{ch6_h_logn}) and level-$l+1$, $l > \log(n) + 1$(\figref{ch6_h_l}).}
    \label{fig:placeholder}
\end{figure}

Note that we still need to count the number of groupings and edges
at levels $\log(n) + 1$ through $2\log(n)$.
These groupings will have the structure of a perfect binary tree of height $\log(n)$.
At every level-$l$($> \log(n)$), the proto-CFLOBDD (see~\figref{ch6_h_logn}) would have two exits with (1) a recursive call to a unique AConnection grouping with two return edges,
(2) a unique grouping through the first BConnection with two return edges to exits: $[1,2]$, (3) a $\tt NoDistinctionNode$ through the second BConnection with one return edge to exit: $[2]$.
At every level-$l$, $l = \log(n) + l'$,
$0 < l' \leq \log(n)$
(see~\figref{ch6_h_l}), there would be $n/2^{l'}$ groupings and a constant number of edges per grouping.
So the total
number
of groupings and edges from level-$\log(n) + 1$ to level-$2\log(n)$ = $n/2 + n/4 + \ldots + 1 = n-1 = \bigO(n)$.
(We used only the
count of the number of
groupings in the calculation, because the number of edges is constant per grouping.)

The total size of
the CFLOBDD that represents
$h_n$ is $\bigO(n)$ + $\bigO(n)$ = $\bigO(n)$.

\smallskip
\noindent
\textit{\TIDD Claim.}

Let us focus on the level-$\log(n)$ of the \TIDD representation for $h_n$.
The number of states at level-$\log(n)$ would be the partitions formed by $f_i(X_i)$.
Each $f_i(X_i)$ creates a partition space over $2^n$ assignments, and the number of 
states is equal to the distinct partitions of assignments satisfying all $f_i(X_i)$ partition spaces.
We will calculate this by considering each $f_i(X_i)$ at a time.

\begin{itemize}
    \item $i = 0$: $f_0$ divides the assignments into $[\{0,2,4,6,\ldots\}, \{1,3,5,7,\ldots\}]$ (we are using the integer values of the Boolean assignments). \#states of \TIDD = 2.
    \item $i = 1$: $f_1$ divides the assignments into $[\{0,1,4,5,\ldots\}, \{2,3,6,7,\ldots\}]$. 
    The new states would have the following $L^{\downarrow_f}$ (again, using integer values):
    $[\{0,4,8,\ldots\}, \{1,5,9,\ldots\}, \{2,6,10,\ldots\},  \{3,7,11,\ldots\}]$. \#states = 4.
    \item $i = 2$: $f_2$ divides the assignments into $[\{0,1,2,3,\ldots\}, \{4,5,6,7,\ldots\}]$. 
    The new states would have the following $L^{\downarrow_f}$ (again, using integer values):
    $[\{0,8,\ldots\}, \{1,9,\ldots\}, \{2,10,\ldots\},  \{3,11,\ldots\}, \ldots, \{7,15,\ldots\}]$. \#states = 8.
    \item $\ldots$
    \item $i = s$: the new states would partition the space of $2^n$ strings into $2^{s+1}$ partitions,
    where the $j^{th}$ partition has elements that satisfy $e \% 2^{s+1} = j$, where $e$ is an element in the $j^{th}$ partition.
    Hence, \#states = $2^{s+1}$.
\end{itemize}

Hence, for $i = n-1$, i.e., the total number of states at level-$\log n$ of
the \TIDD, considering all $f_i$ functions,
is $2^{n+1} = \bigOmega(2^n)$.
Therefore, the total number of states of
the \TIDD that represents
$h_n$ is $\bigOmega(2^n)$.

\end{proof}

This example illustrates how the linear structure of CFLOBDDs enables an exponential compression compared to \TIDDs under the same variable ordering. In a CFLOBDD, each subfunction \(f_i(X_i)\) can be represented independently by a proto-CFLOBDD, and the proto-CFLOBDD for \(f_{i+1}(X_{i+1})\)
is connected directly to the output of \(f_i(X_i)\)---see \figref{ch6_h_logn}.
In contrast, \TIDDs lack this linear compositionality: the same set of states must be reused to represent all subfunctions \(f_{i+1}(X_{i+1})\), which forces the representation to encode exponentially many distinctions and thus leads to an exponential blow-up in the number of states.
\section{Evaluation}
\label{Se:ch6_eval}

In this section, we provide some empirical evidence of how CFLOBDDs perform better than \TIDDs in a practical domain,
thereby showing that the linear structure in CFLOBDDs helps in more efficient representations than \TIDDs
in case of representing Boolean functions.
We use the domain of quantum 
simulation---in particular, the 
Greenberger–Horne–Zeilinger
(\GHZ), Bernstein–Vazirani
(\BV), and Deutsch–Jozsa
(\textit{DJ}) algorithms where CFLOBDDs were shown to perform exceptionally
better than BDDs.
We ran our experiments on
on a system running Ubuntu with an
Intel(R) Xeon(R) Gold 5218 CPU @ 2.30GHz, 64 vCPUs, and 16GB of memory.
Each experiment was executed five times for increasing numbers of qubits, and we report the average results across these runs.

\begin{table}[tb!]
    \centering
    \begin{adjustbox}{width=0.99\linewidth}
    \begin{tabular}{|c|c|c|c|c|c|c|c|c|c|c|c|}
    \hline
         \multirow{2}{*}{Benchmark} & \multirow{2}{*}{\#Qubits} & \multicolumn{5}{c|}{CFLOBDD} & \multicolumn{5}{c|}{\TIDD} \\
         \cline{3-12}
         & & \#Groupings & \#Edges & \#Total & Time(s) & Max. Size & \#Nodes & \#Edges & \#Total & Time(s) & Max. Size\\
         \hline
         \multirow{6}{*}{\GHZ} & 131072 & 139 & 831 & 970 & \textbf{14.87} & 970 & 19 & 213 & \textbf{232}	& 35.22 &	\textbf{233} \\
         \cline{2-12}
         & 262144 & 147 & 879 & 1026 & \textbf{36.21} & 1026 & 20 & 225 & \textbf{245} & 71.16 & \textbf{246}\\
         \cline{2-12}
         & 524288 & 155 & 927 & 1082 & \textbf{97.25} & 1082 & 21 & 237 & \textbf{258} & 144 & \textbf{259}\\
         \cline{2-12}
         & 1048576 & 163 & 975 & 1138 & \textbf{228.21} & 1138 & 22 & 249 & \textbf{271} & 292.9 & \textbf{272}\\
         \cline{2-12}
         & 2097152 & \multicolumn{5}{c|}{Timeout (15min)} & 23 & 261 & \textbf{284} & \textbf{609.81} & \textbf{285}\\
         \cline{2-12}
         & 4194304 & \multicolumn{5}{c|}{Timeout (15min)} & \multicolumn{5}{c|}{Timeout (15min)}\\
         \hline
         \multirow{4}{*}{\BV} & 16 & 30 & 174 & 204 & \textbf{0.003} & \textbf{242} & 7 & 143 & \textbf{150} & 0.012 & 321\\
         \cline{2-12}
         & 32 & 40 & 237 & 277 & \textbf{0.004} & \textbf{324} & 8 & 240 & \textbf{248} & 0.403 & 832\\
         \cline{2-12}
         & 64 & 54 & 323 & \textbf{377} & \textbf{0.005} & \textbf{434} & 9 & 449 & 458 & 200.12 & 37731\\
         \cline{2-12}
         & 128 & 77 & 456 & \textbf{533} & \textbf{0.007} & \textbf{617} & \multicolumn{4}{c|}{Timeout (15min)} & 134480\\
         \hline
         \multirow{5}{*}{\textit{DJ}} & 512 & 33 & 170 & \textbf{203} & \textbf{0.004} & \textbf{290} & 12 & 202 & 214 & 1.27 & 21025\\
         \cline{2-12}
         & 1024 & 36 & 186 & \textbf{222} & \textbf{0.006} & \textbf{318} & 13 & 222 & 235 & 10.37 & 65924\\
         \cline{2-12}
         & 2048 & 39 & 202 & \textbf{241} & \textbf{0.009} & \textbf{346} & 14 & 242 & 256 & 75.46 & 200207\\
         \cline{2-12}
         & 4096 & 42 & 218 & \textbf{260} & \textbf{0.012} & \textbf{374} & 15 & 262 & 277 & 659.62 & 570940\\
         \cline{2-12}
         & 8192 & 45 & 234 & \textbf{279} & \textbf{0.024} & \textbf{402} & \multicolumn{5}{c|}{Timeout (15min)}\\
         \hline
    \end{tabular}
    \end{adjustbox}
    \caption{Performance of CFLOBDDs and \TIDDs on quantum benchmarks---\GHZ, \BV, \textit{DJ}.
    }
    \label{Ta:ch6_quantum_exp}
\end{table}

\tableref{ch6_quantum_exp} reports the performance of CFLOBDDs and \TIDDs on three quantum benchmarks---\GHZ, \BV, and \textit{DJ}, chosen to illustrate the effect of the absence of linearity in \TIDDs. We measure both the construction time and the size of the data structures representing the final quantum state for each benchmark.
\tableref{ch6_quantum_exp} also shows the maximum size of the intermediate state vectors or
(unitary) matrices representation gates
created when running the benchmark, which has a huge impact on the overall performance of \TIDDs and CFLOBDDs.

The size of a \TIDD is computed as the sum of (i) the number of internal and leaf nodes (illustrated in \figref{tidd_oops}) and (ii) the number of edges in the transition lists. We note that the states are not explicitly stored; consequently, there is exactly one node per layer, and the total number of nodes equals the number of levels in the \TIDD.
Furthermore, each transition list is represented as a list of lists. These lists are hash-consed, and each distinct list is counted only once when computing the overall size.

For the \GHZ~benchmark, as shown in~\tableref{ch6_quantum_exp}, we observe that \TIDDs and CFLOBDDs perform comparably, with \TIDDs exhibiting slightly better performance. This is attributable to the structure of the benchmark, which consists of a sequence of simple matrix-multiplication operations, resulting in relatively small intermediate states and gates.
In contrast, for the \BV and \textit{DJ} benchmarks, CFLOBDDs significantly outperform \TIDDs in both time and space. Although the sizes of the final state vectors represented by CFLOBDDs and \TIDDs are similar, the maximum sizes of the intermediate states and gates\footnote{
  In the case of the \BV benchmark for 128 qubits, we report the maximum size observed so far before \TIDDs timeout.
}
differ substantially. In particular, for \TIDDs, the intermediate representations can grow to as many as 570K nodes and edges, whereas CFLOBDDs can represent both intermediate states and gates much more compactly, which directly affects the time taken to run the benchmark.

This behavior arises because CFLOBDDs can decompose a function into smaller sub-functions, represent each sub-function efficiently, and then compose them using the inherent linear structure of CFLOBDDs---a capability that is not available in \TIDDs. Consequently, these results demonstrate that, even in practice, the linear structure of CFLOBDDs plays a crucial role in enabling significantly more compact function representations.
\section{Related Work}
\label{Se:related}

Over the years, BDDs and their variants---including Multi-Terminal BDDs~\cite{dac:CMZFY93,CMU:CS-95-160}, 
Algebraic Decision Diagrams~\cite{DBLP:journals/fmsd/BaharFGHMPS97}, Free Binary Decision Diagrams 
(FBDDs)~\cite[\S6]{DBLP:books/siam/Wegener00}, Binary Moment Diagrams (BMDs)~\cite{dac:BC95}, Hybrid 
Decision Diagrams (HDDs)~\cite{CMU:CS-95-160}, Differential BDDs~\cite{LNCS:AMU95}, Indexed BDDs 
(IBDDs)~\cite{toc:JBAAF97}, and Weighted BDDs (WBDDs)~\cite{Book:ZW2020}---have been employed for 
efficient function representation, with applications spanning program analysis, model checking, and 
quantum simulation.
More recently, CFLOBDDs and their weighted variants 
(WCFLOBDDs)~\cite{DBLP:journals/corr/abs-2305-13610} have demonstrated better compression 
than
BDDs and WBDDs, and have achieved promising results in the domain of quantum simulation~\cite{CAV:SCR23}.
This paper focuses on CFLOBDDs and investigates whether their linear structure is essential to their representational efficiency.
Because
the hierarchical and linear properties of CFLOBDDs are preserved in their weighted variants, the results presented here extend naturally to WCFLOBDDs as well.

Other data structures that generalize BDDs include Sentential Decision Diagrams 
(SDDs)~\cite{darwiche2011sdd} and Variable Shift SDDs~\cite{nakamura2020variable}. These data 
structures extend BDDs by imposing a tree-shaped ordering over variables, and there exist functions 
for which they achieve double-exponential compression over decision trees and exponential compression 
over BDDs. However, SDDs cannot be characterized as deterministic tree automata, and therefore fall 
outside the scope of the comparison with CFLOBDDs in the context of linearity discussed in this paper.

Another approach to introducing hierarchy into BDDs is the notion of \emph{Tree BDDs}~\cite{mcmillan1994hierarchical}. Tree BDDs employ a hierarchical representation that closely resembles a tree automaton. In this framework, the variable set of a Boolean function is recursively partitioned into three components: root variables, left-child variables, and right-child variables. The left and right partitions recursively represent their corresponding sub-functions using the same construction, while the root partition combines the equivalence classes induced by the left and right sub-functions to define the overall function.
A key distinction from CFLOBDDs is that, in Tree BDDs, each partition explicitly associates variables with the corresponding sub-functions. In contrast, CFLOBDDs reuse substructures, wherein groupings are not explicitly tied to particular variables. Even if the Tree BDD framework were modified to eliminate explicit variable instantiation, the resulting structures would still be closely aligned with tree automata and could therefore encounter challenges similar to those faced by \TIDDs, due to the absence of a linear structure.

\section{Conclusion}
\label{Se:conclusion}

This paper investigated whether the linear structure present in CFLOBDDs, in addition to hierarchy, is essential to their succinctness, or whether CFLOBDDs are instead closer in power to a restricted class of tree automata that exploit hierarchy alone.
To answer this question, we introduced a new data structure---Tree‑Automata‑Inspired Decision Diagrams (\TIDDs)---and compared the expressive power of CFLOBDDs and \TIDDs.

We observed that although \TIDDs, like CFLOBDDs, exhibit exponential separation over BDDs for certain families of functions, this behavior does not extend to the general case. This limitation arises because the linear structure of CFLOBDDs enables a disciplined decomposition of a function into subfunctions, followed by their systematic composition according to a prescribed linear ordering. Each subfunction can be represented compactly in isolation, and this ordered composition ensures that these compact representations are reused effectively, yielding an efficient representation of the overall function. In contrast, \TIDDs lack this mechanism, which restricts their ability to generalize the compression advantages of CFLOBDDs.

We formally proved that there exists a family of functions for which, under the same variable ordering, CFLOBDDs are exponentially more succinct than \TIDDs. To assess whether this theoretical gap translates to practical differences, we evaluated both data structures in the domain of quantum-circuit simulation. Our experiments showed that \TIDDs performed significantly worse than CFLOBDDs, primarily due to substantial blow-ups in the representation sizes of intermediate state vectors and unitary matrices. Taken together, these findings confirm that the linear structure in CFLOBDDs is as important as the hierarchy for efficient Boolean functions representations.

\bibliographystyle{ACM-Reference-Format}
\bibliography{refs, df, logic, mab}

\end{document}